\title{Online String Attractors} 
\author{Philip Whittington}{ETH Zurich, Switzerland}
{philip.whittington@rwth-aachen.de}{https://orcid.org/0009-0005-0910-6826}{}
\authorrunning{P. Whittington}
\keywords{String attractors, dictionary compression, online algorithms} 
\DeclareMathOperator{\sff}{sf}
\DeclareMathOperator{\SF}{SF}
\DeclareMathOperator{\dB}{dB}
\DeclareMathOperator{\cost}{cost}
\DeclareMathOperator{\OPT}{OPT}
\newcommand{\cmark}{\ding{51}}%
\newcommand{\xmark}{\ding{55}}%
\begin{document}

\maketitle

\begin{abstract}
In today's data-centric world, fast and effective compression of data is paramount.
To measure success towards the second goal, Kempa and Prezza [STOC2018] introduce the string attractor, a combinatorial object unifying dictionary-based compression.
Given a string \( T \in \Sigma^n\), a string attractor (\(k\)-attractor) is a set of positions \(\Gamma \subseteq [1,n]\), such that every distinct substring (of length at most \(k\)) has at least one occurrence that contains one of the selected positions. 
A highly repetitive string then only needs a few positions to cover all substrings.
String attractors are shown to be approximated by and thus measure the quality of many important dictionary compression algorithms such as Lempel-Ziv 77, the Burrows-Wheeler transform, straight line programs, and macro schemes.

In order to handle massive amounts of data, compression often has to be achieved in a streaming fashion.
Thus, practically applied compression algorithms, such as Lempel-Ziv 77, have been extensively studied in an online setting.
To the best of our knowledge, there has been no such work, and therefore are no theoretical underpinnings, for the string attractor problem.
We introduce a natural online variant of both the \(k\)-attractor and the string attractor problem. 
Here, the algorithm is additionally subject to the condition that it is presented the input string position by position from left to right, and needs to decide whether to include this position in the attractor before the next position is revealed.

First, we show that the Lempel-Ziv factorization corresponds to the best online algorithm for this problem, resulting in an upper bound of \(\mathcal{O}(\log(n))\) on the competitive ratio. 
On the other hand, there are families of sparse strings which have constant-size optimal attractors, e.g., prefixes of the infinite Sturmian words and Thue-Morse words, which are created by iterative application of a morphism.
We consider the most famous of these Sturmian words, the Fibonacci word, and show that any online algorithm has a cost growing with the length of the word, for a matching lower bound of \(\Omega(\log(n))\).
A slightly weaker result is obtained for Thue-Morse words.
For the online \(k\)-attractor problem, we show tight (strict) \(k\)-competitiveness.
\end{abstract}

\newpage

\section{Introduction}

Data is the key element of current technical advances. Large amounts of data are being collected and analysed in every context of our lives. In order to handle these, informatiion needs to be stored in a compressed form.
Kempa and Prezza \cite{roots} introduce string attractors as a tool to better understand the compression quality of compression methods.
They show that any dictionary compression algorithms, such as the Lempel-Ziv factorization, the Burrows-Wheeler transform, and straight-line programs, can be understood as approximation algorithms for the string attractor problem.
Kempa and Saha \cite{DBLP:conf/soda/KempaS22} add the LZ-End compression algorithm proposed by Kreft and Navarro \cite{lzend} to this list.
String attractors also support more involved string operations and structures, such as universal data structures \cite{roots}, fast indexed queries \cite{universalindexing}, and locating and counting indices \cite{fastindexing}, which underlines their position as a versatile and important tool to describe the theoretical backbone of dictionary compression.

For a string of length \(n\), an attractor is a set of positions \(\Gamma \subseteq [1,n]\) covering all distinct substrings, that is, every distinct substring has an occurrence crossing at least one of the selected positions.
\begin{definition}[\(k\)-attractor \cite{roots}] 
    A set \(\Gamma \subseteq [1,n]\) is a \(k\)-attractor of a string \(T \in \Sigma^n\) if every substring \(T[i\dots j]\) with \(i \leq j < i+k\) has an occurrence \(T[i'\dots j']\) with \(j'' \in [i'\dots j']\) for some \(j'' \in \Gamma\).
  \end{definition}
A solution \(\Gamma\) is called a \textit{string attractor} or simply \textit{attractor} if \(k=n\). 
The corresponding optimization and decision problems are the minimum-\(k\)-attractor problem and the \(k\)-attractor problem, respectively. The optimal attractor is denoted by \(\Gamma^*\) and has size \(\gamma^*\).

In their seminal paper, Kempa and Prezza \cite{roots} also show that computing the smallest \(k\)-attractor is NP-complete for any \(k \geq 3\), by giving a reduction from \(k\)-set cover, and extend this proof to non-constant \(k\), especially for \(k = n\).
On the other hand, the problem is trivially solvable in polynomial time for \(k = 1\) by a greedy algorithm. 
The gap for \(k=2\) was closed by Fuchs and Whittington \cite{2att}, who show that the 2-attractor problem is also NP-complete.

Variants of the problem have been introduced, such as the sharp \(k\)-attractor problem \cite{verAndOpt}, which only considers distinct substrings of length exactly \(k\), and the circular attractor problem \cite{combinatorialView} defined on the substrings of circular words. 

Mantaci et al.\ \cite{combinatorialView} study the behavior of attractors under combinatorial operations
and obtain results for commonly studied families of words, such as Sturmian words, Thue-Morse words, and de Bruijn words.
The results on Thue-Morse words build on the work of Kutsukake et al.\ \cite{thuemorseattractor}.
Schaeffer and Shallit \cite{automatic} investigate string attractors for automatic sequences and compute attractors for finite prefixes of infinite words, including Fibonacci and Thue-Morse sequences.
Further work on attractors of such classes has been done by Restivo, Romana, and Sciortino \cite{infinite}, Gheeraert, Romana, and Stipulanti \cite{kbonacci}, and Dvořáková \cite{episturmian}.

Building on string attractors, Kociumaka, Navarro and Prezza\ \cite{relativeSubstringComplexity} introduce the \emph{relative substring complexity} measure \(\delta\), which counts the number of different substrings of length \(l\) and scales it by \(l\).
It is smaller than \(\gamma^*\) by at most a logarithmic factor and efficiently computable.

\subsection{Our Contributions}
We consider the string attractor and \(k\)-attractor problems in an online setting.
The online string attractor problem is closely tied to the Lempel-Ziv factorization, as it turns out that Lempel-Ziv is the optimal online algorithm for this problem.
This observation, combined with the results of Kociumaka, Navarro and Prezza \cite{relativeSubstringComplexity}, yields an upper bound of \(\mathcal{O}(\log(n))\).
A matching lower bound of \(\log(n)\) for this problem is achieved by studying Fibonacci and Thue-Morse words
for which we show that any online algorithm induces a cost that is at least logarithmic in the input length, whereas it is known that there exist constant-size attractors.
For the \(k\)-attractor problem, we show \(k\) to be both an upper and lower bound on the competitive ratio in \Cref{sec:scope}.
The first result is achieved by an induction that bounds the online performance to the relative substring complexity, which is itself a bound to the optimal attractor.
For the second result we introduce a technique for an adversary to construct hard strings for the given online algorithm.
This is done by singling out a small fraction of all substrings of length \(k\) and using them as delimiters, such that the online algorithm needs to put a marking for each of the remaining substrings.
We then show that an offline algorithm has a much better performance on de Bruijn sequences, and combine these two results for the desired competitive ratio.

\section{Lempel-Ziv Compression}

Lempel-Ziv is a seminal algorithm for the field of dictionary compression, named an IEEE Milestone in 2004.
It is of high theoretical importance, but also practically applied.
Although it was developed in 1977, it is still part of the foundation for commonly used compression data types such as ZIP, PNG, and GIF.

The Lempel-Ziv factorization subdivides an input string into \emph{phrases} that correspond to substrings seen at earlier positions, which are called \emph{sources}.
The string can then be compressed by having phrases link back to their earlier sources using their starting position (or distance) and length, which for large phrases reduces to logarithmic size.

\begin{definition}[Lempel Ziv Factorization \cite{lempelziv, thuemorseattractor}]
    For any string \(T\), the Lempel-Ziv factorization of \(T\) is the sequence of phrases \(f_1, \dots, f_z\) of non-empty strings such that \(T = f_1 \dots f_z\), and for any \(1 \leq i \leq z\), \(f_i\) is the longest prefix of \(f_i \dots f_z\) which has at least two occurrences in \(f_1 \dots f_i\), or \(|f_i| = 1\) otherwise.
    The earlier occurrence constitutes the source of \(f_i\).
    We call \(z\) the size of the Lempel-Ziv factorization.
\end{definition}

We discuss two common variants of the Lempel-Ziv factorization.
First, it is common to forbid overlap between phrases and sources \cite{roots}.
In the above definition, the phrase \(f_i\) then needs to have at least one occurrence in \(f_1 \dots f_{i-1}\), that is, the source must be fully contained in the already factorized prefix.
The variant in our initial definition is called \emph{self-referencing}. 

Second, a phrase can be defined to be one position longer and include the first new position, except for the case \(|f_i| = 1\) in which a new element of the alphabet is introduced. 
Then, \(f_i\) is the shortest prefix of \(f_i \dots f_z\) which has only one occurrence in \(f_1 \dots f_{i}\), or none in \(f_1 \dots f_{i-1}\) if it also not self-referencing. 
Note that in this case, the last phrase \(f_z\) might not fit this definition.
That way, each phrase (except possibly the last one) is a novel substring, that is, a first occurrence in \(T\).
We are not aware of a common name to distinguish this variant, so we call the augmented version \emph{novel}.
We show the differences between the variants on the initial example used by Lempel and Ziv in \Cref{table:lzexamples}.

Note that the factorization as originally defined by Lempel and Ziv is both self-referencing and novel.
The shown variants can only increase the size of the factorization.
Kempa and Prezza \cite{roots} show the relations \(\gamma^* \leq z \leq \mathcal{O}(\gamma^* \log^2 (n/\gamma^*))\) between the size \(z\) of the Lempel-Ziv factorization of any string \(T\) and the size \(\gamma^*\) of the optimal string attractor of \(T\), which was later improved by Kociumaka, Navarro and Prezza \cite{relativeSubstringComplexity} to \(z \leq \mathcal{O}(\gamma^* \log(n/\delta))\).

\begin{table}[!t]
\renewcommand{\arraystretch}{1.3}
\caption{Behavior of Lempel-Ziv Variants.}
\begin{tabular}{cccc}
\toprule
self-ref. & novel & \(T = aaabbabaabaaabab\) & \(z\) \\
\midrule
\cmark & \cmark & \(a|aab|ba|baa|baaa|bab\) & 5 \\
\cmark & \xmark & \(a|aa|b|b|ab|aab|aaab|ab\) & 7 \\
\xmark & \cmark & \(a|aa|b|ba|baa|baaa|bab\) & 6 \\
\xmark & \xmark & \(a|a|a|b|b|ab|aab|aaab|ab\) & 8 \\
\bottomrule
\end{tabular}
\label{table:lzexamples}
\end{table}

\section{Online Attractors}\label{section:online}

In real-world settings, problems often need to be solved as soon as they occur, and partial solutions need to be committed before the entire scope of the problem is known.
This is especially true in the field of data compression where data may be too large to completely load into a computer's cache to compress it, or the data might need to be compressed as soon as it is created in a streaming fashion.
The theoretical framework to describe these settings is the area of \emph{online algorithms}.

An instance \(I\) of an \emph{online problem} is a finite sequence of \emph{requests} \(x_1, \dots, x_n\) and the corresponding solution \(S\) is a finite sequence of \emph{answers} \(y_1, \dots, y_n\). The online algorithm outputs the answers, where \(y_i\) may only depend on \(x_1, \dots, x_i\) and \(y_1, \dots, y_{i-1}\) \cite{introonline}.

Often, offline algorithms outperform online algorithms in the quality of the computed solution.
The relation between the costs of their results is called the \emph{competitive ratio}.
For any online minimization problem, let \(\mathbf{I}\) be the set of all instances of that problem and \(\mathbf{S}\) the set of all solutions.
Further, let \(f\colon \mathbf{I} \to \mathbf{S}\) be the function describing the output of a fixed online algorithm and \(f^*\colon \mathbf{I} \to \mathbf{S}\) the function describing the output of an optimal offline algorithm.
The online algorithm is then \emph{\(c\)-competitive} if there is a non-negative constant \(\alpha\) such that 
\[|f(I)| \leq c|f^*(I)| + \alpha\]
for all instances \(I \in \mathbf{I}\) \cite[Definition 1.6]{introonline}. 
If \(\alpha = 0\), we speak of the \emph{strict competitive ratio}.
This is commonly modeled by the instances being created by an \emph{adversary} that knows the online algorithm and accordingly chooses the hardest instance.

\begin{definition}[The Online \(k\)-Attractor Problem]
    The input is a word \(T = T[1] T[2] \dots T[n] \in \Sigma^*\), where both \(n\) and \(\sigma = |\Sigma|\) are not known to the online algorithm.  
    The online algorithm receives \(T\) position by position, that is, in step \(i\), \(T[i]\) is revealed.
    It then needs to decide whether to mark this position as part of the \(k\)-attractor. 
    After each step \(j\), the current set of markings \(\Gamma(j)\) needs to be a valid \(k\)-attractor for \(T[1,j]\). 
\end{definition}

This is the natural online extension, as it guarantees that the output is a valid \(k\)-attractor at all times. 
As before, \(k = n\) defines the online string attractor problem. 

The straightforward online algorithm for this problem is the greedy algorithm that scans the input string \(T\) from left to right and inserts a marking whenever it encounters a new substring of length at most \(k\).
This algorithm has been briefly considered for infinite words by Schaeffer and Shallit \cite[Section 9]{automatic}.
Let \(j\) be the last marked position, or 0 otherwise.
At time \(i > j\), \(T[i]\) is revealed to the online algorithm which has to decide whether to mark this position.
If the string \(T[j+1,i]\) did not appear before, we call it \emph{novel} and mark position \(i\).
Note that although this approach implements a straightforward greedy strategy, any deterministic online algorithm deviating from this strategy needs to place its markings earlier than greedy, that is, greedy chooses the last possible time to put a marking.
Thus, a more appropriate name for this algorithm is \textsc{Lazy}.

\begin{theorem}\label{theorem:greedy}
    There is no deterministic algorithm for the online \(k\)-attractor problem that has a better competitive ratio than the \textsc{Lazy} algorithm. Further, for each deterministic algorithm \(A\) that is not \textsc{Lazy} (or equivalent to it), there are instances on which \(A\) has higher costs than \textsc{Lazy}.
\end{theorem}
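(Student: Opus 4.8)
The plan is to prove a single pointwise domination statement — that on every input, \textsc{Lazy} places no more marks than any deterministic valid online algorithm — from which both assertions follow. For the first part, if $|\textsc{Lazy}(I)| \le |A(I)|$ for every instance $I$ and every valid deterministic $A$, then any competitiveness certificate $|A(I)| \le c\,|\OPT(I)| + \alpha$ transfers verbatim to \textsc{Lazy}; hence the competitive ratio of \textsc{Lazy} is at most that of $A$, and no algorithm can beat it. For the second part, the same domination, made strict, yields an explicit short instance on which $A$ pays more.

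First I would set up a forcing characterization. Call a position $i$ \emph{forced} for an algorithm whose last mark lies at $\ell$ if there is a substring $s$ of length at most $k$ whose first (hence, at time $i$, only) occurrence ends at $i$ and begins after $\ell$; such an $s$ is uncovered, so validity compels a mark inside that occurrence, and since no mark lies in $(\ell, i-1]$ the mark must land on $i$. By definition \textsc{Lazy} marks $i$ exactly at its forced positions. The crucial observation is that whether $s$ is a first occurrence is a property of $T$ alone, independent of the marks placed so far; only the position of the last mark enters the forcing condition.

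Then I would prove, by induction on the prefix length $j$, the invariant: (i) $m_{\textsc{Lazy}}(j) \le m_A(j)$, where $m(\cdot)$ counts the marks placed in $T[1,j]$; and (ii) if these counts are equal, then \textsc{Lazy}'s last mark is at least as far to the right as $A$'s. The only delicate case is when the counts are equal and \textsc{Lazy} is forced at $j+1$: I must show that $A$ is forced as well, so that it cannot fall behind. Here (ii) is used — with $A$'s last mark $a$ at most \textsc{Lazy}'s last mark $b$, the triggering substring $s$ (a first occurrence lying entirely inside $(b,j+1] \subseteq (a,j+1]$) is uncovered for $A$ too, because $A$ has no mark in $(a,j]$; hence $A$ must also mark $j+1$. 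The remaining cases (\textsc{Lazy} not forced, or the counts already strictly separated) preserve the invariant routinely. I expect this tie-breaking step to be the main obstacle, precisely because $A$ may have spent its marks in entirely different places than \textsc{Lazy}; the resolution is that in the tie case the two have used equally many marks, and the history-independence of ``first occurrence'' means $A$'s differently placed marks cannot have pre-covered the novel substring $s$.

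Finally, for the strict separation I would argue that a valid algorithm is forced to agree with \textsc{Lazy} everywhere it differs only by premature marks. At the first position where a valid $A$ deviates on some input, the two share an identical mark history, so their forcing conditions coincide; $A$ cannot omit a forced mark without becoming invalid, so the deviation must be a mark at an unforced position $i^*$. Truncating the input at $i^*$ then yields an instance $T = T[1,i^*]$ on which \textsc{Lazy} and $A$ agree up to $i^*-1$, \textsc{Lazy} declines to mark $i^*$ (still valid, as $i^*$ is unforced) while $A$ marks it, giving $|A(T)| = |\textsc{Lazy}(T)| + 1$. This exhibits the required instance and completes the argument.
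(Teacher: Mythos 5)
Your proposal is correct, and the underlying idea is the same as the paper's: \textsc{Lazy} marks at the last admissible moment, so on every input it places no more marks than any valid deterministic online algorithm, and this pointwise domination transfers any competitiveness guarantee verbatim. The formalization, however, is genuinely different. The paper uses an exchange argument: at each position where $A$ deviates, $A$ must have marked some $y$ strictly before \textsc{Lazy}'s mark $x$ (else the novel substring forcing $x$ would go uncovered), and moving $y$ to $x$ never loses coverage; iterating over all deviating marks gives $|A(T)| \ge |\textsc{Lazy}(T)|$. You instead run an induction on the prefix length with a two-part invariant (mark counts compared, and last-mark positions compared when the counts are tied), and you isolate exactly the fact that makes the tie case work: being a first occurrence is a property of $T$ alone, so $A$'s differently placed marks cannot have pre-covered a novel substring lying wholly to the right of its last mark. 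This buys rigour -- it makes explicit why the paper's ``this is done for all differently marked positions'' step is sound, which the terse exchange argument leaves implicit, and it cleanly handles the worry that $A$'s marks may sit in entirely different places than \textsc{Lazy}'s. For the strict separation, your truncation at $A$'s first premature mark is a minor variation on the paper's device of injecting a fresh symbol right after $y$ so that both algorithms must also pay at $y+1$; both constructions yield the required instance.
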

\begin{proof}
    Given any input string \(T\) and any online algorithm \(A\), consider the first marked position where the \textsc{Lazy} deviates from the online algorithm. 
    $\textsc{Lazy}$ marks a position \(x\) whereas \(A\) marked some \(y < x\). 
    The position of \(A\) has to appear earlier, otherwise the new substring causing \textsc{Lazy} to cover \(x\) is not covered by \(A\) at some point. 
    \textsc{Lazy} has marked all substrings in \(T[1,x]\), so we change the marking \(y\) to \(x\) in \(A\)'s output
    while still covering at least as many substrings. 
    This is done for all differently marked positions of \(A\), so it produces a \(k\)-attractor of a size that is at least equal to that computed by \textsc{Lazy}.
    If we instead introduce a new symbol right after position \(y\), both \(A\) and \textsc{Lazy} will include position \(y+1\), thus \textsc{Lazy} is better on this instance.
\end{proof}

This statement also holds for the online string attractor problem, and both sharp variants.
Note that while \textsc{Lazy} is the natural greedy algorithm for the online variant, there are more sophisticated greedy algorithms in the offline case.
For example, if the entire word is known, we can first mark positions that cover a symbol uniquely appearing at that position, or choose the next position to mark according to a weight function on its substrings.
The resulting attractors of such algorithms are always minimal under removal and therefore \(k\)-approximations \cite{verAndOpt}.
Minimality does not hold for \textsc{Lazy}, however we show that it still achieves \(k\)-competitiveness.

The initial purpose of string attractors is their ability to serve as lower bounds for famous dictionary compression algorithms such as Lempel-Ziv, as those approximate string attractors.
We observe that Lempel-Ziv actually computes an online string attractor, thus any hardness results for online attractors translate to Lempel-Ziv.

\begin{theorem}\label{theorem:lzonline}
    The greedy algorithm \textsc{Lazy} for the online string attractor problem exactly computes the self-referencing, novel Lempel-Ziv factorization. 
\end{theorem}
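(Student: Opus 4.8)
The plan is to show that \textsc{Lazy} and the self-referencing novel Lempel--Ziv factorization, when scanning \(T\) from left to right, place phrase boundaries according to one and the same local rule, and then to conclude by induction on the phrase index that the two factorizations coincide position for position. First I would translate \textsc{Lazy}'s output into a factorization: if \textsc{Lazy} marks positions \(m_1 < \dots < m_r\), I declare the phrases to be the blocks \(T[m_{t-1}+1, m_t]\) between consecutive marks (with \(m_0 = 0\)), together with a trailing block \(T[m_r+1, n]\) when \(m_r < n\). The starting position of each block is thus the position immediately after the previous mark, exactly as the phrase starts are defined for the factorization.

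Next I would reconcile the two novelty conditions for a phrase starting at \(p\). \textsc{Lazy} closes such a phrase at the smallest index \(q \geq p\) for which \(T[p, q]\) did not appear before, i.e.\ has no occurrence in \(T[1,q-1]\). I would show that this is equivalent to ``\(T[p,q]\) has no occurrence starting at a position \(< p\)'', which is in turn equivalent to ``\(T[p,q]\) has exactly one occurrence in \(T[1,q]\)'', precisely the self-referencing novel Lempel--Ziv condition. The key observation is that any occurrence of the length-\((q-p+1)\) string \(T[p,q]\) inside \(T[1,q]\) must start at some position \(a \leq p\), so a second occurrence can only come from a source starting strictly before \(p\); this is exactly the self-referencing source permitted by the factorization, which may overlap the phrase. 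I would also dispatch the singleton case: when \(T[p]\) is a symbol never seen before, \(T[p,p]\) is already novel, so both the \(|f_i|=1\) branch of the factorization and \textsc{Lazy} emit a length-one phrase.

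Having identified the common local rule, I would argue by induction on the phrase index. Both factorizations begin their first phrase at position \(1\); assuming the first \(t-1\) phrases coincide, both start phrase \(t\) at the same position \(p\) and both close it at the same minimal novel index \(q\), so the \(t\)-th phrases are equal. The main obstacle is the careful bookkeeping of the boundary and suffix conventions rather than any deep combinatorics: I must verify that the off-by-one of the novel variant (each phrase extended by its first new position) matches \textsc{Lazy}'s rule of marking the first position at which the block becomes novel, and I must treat the final phrase, where the factorization's remark that \(f_z\) ``might not fit this definition'' corresponds exactly to the case in which \textsc{Lazy} never marks a novel block at the end and leaves a trailing repeated suffix unmarked. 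Making these conventions line up precisely is where the argument needs the most attention.
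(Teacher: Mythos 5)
Your proposal is correct and takes essentially the same route as the paper's (much terser) proof: both identify \textsc{Lazy}'s marking rule --- close the current block at the first position where it occurs only once in the prefix read so far --- with the defining condition of a self-referencing, novel Lempel--Ziv phrase, so the boundaries coincide. Your explicit induction and your handling of the singleton and final-phrase conventions simply spell out bookkeeping that the paper leaves implicit.
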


\begin{proof}
    On a string \(T \in \Sigma^n\), \textsc{Lazy} marks a position \(i\) when the substring seen since the last marking \(j\) is novel, that is, \(T[j+1, i]\) does not appear a second time in \(T[1,i]\).
    It is then the shortest such prefix of \(T[j+1,n]\) and thus exactly a Lempel-Ziv phrase.
\end{proof}

This equivalence immediately yields interesting results for both the online string attractor problem and the Lempel-Ziv compression.
First, the efficient online computation of the Lempel-Ziv compression is well studied, see e.g., Policriti and Prezza \cite{onlinelz}, thus the online string attractor can also be computed efficiently.
On the other hand, we can now use the toolbox of competitive online algorithms to better understand the Lempel-Ziv factorization.

In the following, we limit the size \(\sigma\) of the alphabet or the scope \(k\) of the attractor. 
If we were to limit both at the same time, only a finite number of substrings could exist, which implies that the maximum costs are bounded.


\section{Limiting the Alphabet}

We first study hard instances for the online string attractor problem.
More precisely, we want to consider families of words that have a constant-size (offline) attractor, which can only be achieved if the alphabet is of constant size.
We especially turn our attention to infinite strings, and families of words that are prefixes of these strings.
\Cref{lemma:RelSubCom} tells us that a family of words only has optimal attractors of constant size if its complexity function, that is, the number of factors of length \(l\), grows at most linearly with \(l\).

However, if this number grows less than linearly, we obtain periodic instances that, after a finite prefix, repeat the same string according to the Morse-Hedlund Theorem \cite{morsehedlund}.
Then, after that string is repeated twice, the online attractor also does not incur any further cost.
\textsc{Lazy}'s cost is then also bounded by a constant and we only obtain a constant lower bound on the competitive ratio.

In the following, we consider infinite strings that are generated by an iterative application of a morphism and are thereby equipped with a linear complexity function.
As described in the preliminaries, such morphisms give rise to an infinite family of prefixes of the infinite string.
There are already rich results on offline attractors of such families, e.g., Sturmian words \cite{combinatorialView, kbonacci, episturmian} and Thue-Morse words \cite{combinatorialView, thuemorseattractor}.
We study the behavior of the \textsc{Lazy} algorithm on these families and compare the results.

\subsection{Fibonacci Words} 

Perhaps the most famous sequence of numbers is the \emph{Fibonacci sequence}, where each value after the two initial ones is defined as the sum of its two predecessors.
We define the Fibonacci sequence (of numbers) by \[f_{-2} = 0, f_{-1} = 1, f_m = f_{m-1} + f_{m-2}\;.\]
Fibonacci numbers are closely connected to the \emph{golden ratio} \(\phi = (1 + \sqrt{5})/2\), as they can be computed by \(f_m = \lfloor \phi^{m-2}/\sqrt{5} \rceil\), which shows their exponential growth.

\emph{Fibonacci words} follow the same idea, as each Fibonacci word is the concatenation of its two predecessors, to ultimately form the infinite word also called the Fibonacci sequence.
Fibonacci words are recursively defined on a binary alphabet \(\{a,b\}\), with 
\[F_{-2} = \varepsilon, F_{-1} = b, F_0 = a, F_m = F_{m-1} F_{m-2}\;.\]
They are also the result of iteratively applying the morphism 
\(\varphi(a) = ab, \varphi(b) = a\). 
Starting at \(F_0 = a\), we get \(F_m = \varphi^m(a)\), and it holds that \(F_{m}\) is a prefix of \(F_{m+1}\) for all \(m \geq 0\). 
The infinite \emph{Fibonacci sequence} \(F_\infty\) is defined as the fixed point of this morphism. 
Due to the analogous definition and the choice of indices, the length of \(F_m\) is \(f_m\).

\emph{Sturmian sequences} are the aperiodic binary sequences with lowest possible complexity function \(\sigma(l) = l+1\), that is, there are \(l+1\) different factors of length \(l\), which is a clear indicator for a small attractor.
Note that it is necessary for the sequence to be binary in order to achieve this complexity function.
Due to this complexity function, for any set of factors of length \(l\) there is exactly one factor \(w \in \{a,b\}^l\) such that both \(aw\) and \(bw\) are factors of the sequence.
This factor is called the \emph{left special factor}. 

If all left special factors are (exactly the) prefixes of a Sturmian sequence, it is called a \emph{standard Sturmian sequence}.
Any standard Sturmian sequences is uniquely determined by its \emph{directive sequence}.
A directive sequence is a sequence of natural integers \((q_i)_{i \in \mathbb{N}}\) with \(q_0 \geq 0\) and \( q_i > 0\) for \(i \geq 1\).
It induces a family of words \(\mathcal{S}\) by 
\[S_{-1} = b, S_0 = a, S_m = S_{m-1}^{q_{m-1}} S_{m-2}.\]
Again, \(S_m\) is a prefix of \(S_{m+1}\) for all \(m \geq 0\) such that this procedure is guaranteed to generate an infinite sequence.
Mantaci et al. \cite{combinatorialView} have shown that all standard Sturmian words have an attractor of size 2, which is optimal.
This result was subsequently strengthened by Dvořáková \cite{episturmian} for larger alphabets.

A sequence is \emph{episturmian} if its set of factors is closed under reversal and contains at most one left special factor of each length.

Dvořáková \cite{episturmian} shows that each factor \(w\) of an episturmian sequence has an attractor of size \(\sigma'\) where \(\sigma'\) is the number of distinct letters in \(w\).  
Sturmian sequences correspond to aperiodic binary episturmian sequences, and the Fibonacci sequence is the standard Sturmian sequence with directive sequence \(q_i = 1\), for all \(i \in \mathbb{N}\).
Thus, each Fibonacci word has an attractor of size two.

\begin{table}[!t]
\renewcommand{\arraystretch}{1.3}
\caption{Fibonacci and Kernel words \cite{numberfibonacci}. }
\begin{tabular}{r|l l l l l l l l}
\toprule
$m$ & -2 & -1 & 0 & 1 & 2 & 3 & 4 & 5\\
\midrule
$F_m$ & $\varepsilon$ & $b$ & $a$ & $ab$ & $aba$ & $abaab$ & $abaababa$ & $abaababaabaab$\\
$K_m$ & $\varepsilon$ & $a$ & $b$ & $aa$ & $bab$ & $aabaa$ & $babaabab$ & $aabaababaabaa$\\
$f_m$ & 0 & 1 & 1 & 2 & 3 & 5 & 8 & 13 \\
\bottomrule
\end{tabular}
\label{table:kernels}
\end{table}

\begin{definition}[Kernel word]\label{definition:kernel}
    Let \(\delta_m = a\) if \(m\) is even and \(\delta_m = b\) if \(m\) is odd, then \(\delta_m\) denotes the last letter of \(F_m\).
    For every Fibonacci word \(F_m\) we call \(K_m = \delta_{m+1} F_m \delta_m^{-1}\) the \(m\)-th kernel word or singular word.
    The concatenation of \(\delta_m^{-1}\) denotes the removal of \(\delta_m\) at the end of \(F_m\).
\end{definition}

\Cref{table:kernels}, originally from Huang and Wen \cite{numberfibonacci}, shows the first Fibonacci and kernel words.
Concatenating all kernel words also yields the Fibonacci sequence \cite[Theorem 1]{singularfibonacci}.
Thus, their structure is closely related to the Fibonacci sequence itself.
Our main result on Fibonacci words is that kernel words correspond exactly to the novel substrings that incur costs for \textsc{Lazy}.

\begin{lemma}\label{lemma:propkernels}
    We note the following properties of kernel words:
    \begin{enumerate}
        \item Every kernel word \(K_m = \delta_{m+1} F_m \delta_{m}^{-1}\) is a palindrome \cite[Property 2.9]{singularfibonacci}.
        \item \(K_{m-2}\) is the largest kernel word contained in \(F_m\) \cite[Property 2.5]{singularfibonacci}.
        \item The largest kernel word in \(F_m\) appears only once \cite[Proposition 1.8]{returnfibonacci}.
    \end{enumerate}
\end{lemma}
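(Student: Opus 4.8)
The plan is to derive all three properties from the recursion \(F_m = F_{m-1}F_{m-2}\) together with the classical ``almost-palindrome'' structure of Fibonacci words, treating the properties in increasing order of difficulty. For Property 1, I would first prove that deleting the last two letters of \(F_m\) leaves a palindrome; writing these two letters as \(\delta_{m-1}\delta_m\), this reads \(F_m = P_m\,\delta_{m-1}\delta_m\) with \(P_m\) a palindrome. This is itself an induction on \(m\): expanding \(F_m = F_{m-1}F_{m-2}\) and using the standard fact that \(F_{m-1}F_{m-2}\) and \(F_{m-2}F_{m-1}\) agree except in their last two positions lets one identify \(P_m\) and check \(P_m = \widetilde{P_m}\) (its reversal). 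Granting this, \(K_m = \delta_{m+1}\,(F_m\delta_m^{-1}) = \delta_{m+1}P_m\delta_{m-1}\); since \(m+1\) and \(m-1\) have equal parity we have \(\delta_{m+1}=\delta_{m-1}\), so \(K_m = \delta_{m-1}P_m\delta_{m-1}\) is a palindrome flanked symmetrically by equal letters, as claimed.

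For Properties 2 and 3 I would record two elementary facts: \(|K_j| = |F_j| = f_j\) (one letter is prepended and one deleted), and \(K_j \ne F_j\), since \(K_j\) ends in \(\delta_{j-1}\) whereas \(F_j\) ends in \(\delta_j \ne \delta_{j-1}\). The upper bound of Property 2 then reduces to showing that no \(K_j\) with \(j \ge m-1\) is a factor of \(F_m\). As \(|K_m| = |F_m|\), a copy of \(K_m\) in \(F_m\) would force \(K_m = F_m\), which is false; for \(K_{m-1}\), of length \(f_{m-1}<f_m\), I would invoke the occurrence-splitting principle for \(F_m = F_{m-1}F_{m-2}\): every factor occurrence lies wholly inside \(F_{m-1}\), wholly inside \(F_{m-2}\), or straddles the junction. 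A whole-block copy inside \(F_{m-1}\) would equal \(F_{m-1}\) (equal lengths), contradicting \(K_{m-1}\ne F_{m-1}\); a copy inside \(F_{m-2}\) is impossible by length; and the straddling case is excluded inductively using the palindromic normal form \(K_{m-1}=\delta_{m-2}P_{m-1}\delta_{m-2}\) together with control of the short factors occurring across the boundary.

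For the containment in Property 2 and all of Property 3, I would prove the strengthened inductive claim that \(K_{m-2}\) occurs in \(F_m\) exactly once, locating that occurrence via the self-similar decomposition \(F_m = F_{m-2}F_{m-3}F_{m-2}\) obtained from \(F_{m-1}=F_{m-2}F_{m-3}\). Since \(|K_{m-2}| = |F_{m-2}|\), neither copy of \(F_{m-2}\) can contain \(K_{m-2}\) (each would have to equal it) and the central \(F_{m-3}\) is too short, so any occurrence must cross a junction; a single placement, centred on the \(F_{m-3}\) block, is then forced, yielding existence and uniqueness simultaneously. The main obstacle I anticipate is exactly this boundary bookkeeping: making the occurrence-splitting rigorous requires controlling which short factors appear across the junctions of \(F_{m-1}F_{m-2}\), and the cleanest way to do so is to carry the precise position of the central occurrence of \(K_{m-2}\) (equivalently of the bispecial prefix \(P_{m-2}\)) in the induction hypothesis, rather than tracking only its multiplicity. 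All three statements are classical and agree with the cited references; the route above is a self-contained alternative that avoids appealing to the singular-word factorisation of \(F_\infty\).
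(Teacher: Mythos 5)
The paper does not prove this lemma at all: each of the three properties is imported verbatim from the cited literature (Properties~2.9 and~2.5 of the singular-words paper and Proposition~1.8 of the return-words paper), so you are attempting strictly more than the paper does. Within your sketch, the argument for Property~1 is essentially complete and correct: from \(F_m = P_m\,\delta_{m-1}\delta_m\) with \(P_m\) a palindrome one gets \(K_m = \delta_{m+1}P_m\delta_{m-1} = \delta_{m-1}P_m\delta_{m-1}\), a palindrome. One caveat on dependency order: the paper derives its Lemma~\ref{lemma:fibpalindrome} (that \(F_m\delta_m^{-1}\delta_{m+1}^{-1}\) is a palindrome) \emph{from} Property~1 of this lemma, whereas you derive Property~1 from that palindromic-prefix fact; this is not circular provided you prove the prefix fact independently via the near-commutation of \(F_{m-1}F_{m-2}\) and \(F_{m-2}F_{m-1}\), as you indicate, but you should make that independence explicit.

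For Properties~2 and~3, however, there is a genuine gap, and it sits exactly where you say it does. Excluding whole-block occurrences by the length/last-letter argument (\(|K_j| = f_j\), \(K_j \ne F_j\)) is fine, but it only disposes of the easy cases; the entire content of both properties is the treatment of occurrences straddling a junction, and for these you offer only the assertion that they are ``excluded inductively \dots together with control of the short factors occurring across the boundary'' and that ``a single placement \dots is then forced.'' No mechanism is given for that control, and the induction hypothesis you propose to carry (the position of the central occurrence of \(K_{m-2}\)) is not formulated precisely enough to check that it closes. There is also a concrete inaccuracy that suggests the bookkeeping has not been done: the unique occurrence of \(K_{m-2}\) in \(F_m\) is \(F_m[f_{m-1}, f_m-1]\), which overlaps the middle block \(F_{m-3}\) of the decomposition \(F_{m-2}F_{m-3}F_{m-2}\) in exactly its last letter and otherwise lies over the final \(F_{m-2}\); it is not ``centred on the \(F_{m-3}\) block.'' Until the straddling positions (a range of order \(f_{m-3}\) candidate starting points per junction) are actually ruled out, Properties~2 and~3 remain unproven by your route; as written, the safe course is to keep the paper's citations.
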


\begin{lemma}\label{lemma:fibpalindrome}
    For each Fibonacci word \(F_m\), its prefix \(F_m \delta_{m}^{-1} \delta_{m+1}^{-1} \) consisting of all but its last two characters is a palindrome.
\end{lemma}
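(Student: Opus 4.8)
The plan is to deduce the statement directly from the palindrome property of kernel words, the first item of \Cref{lemma:propkernels}. The key observation is that the word in question is exactly the kernel word \(K_m = \delta_{m+1} F_m \delta_m^{-1}\) with its first and last letters removed: stripping the leading \(\delta_{m+1}\) from \(K_m\) returns \(F_m \delta_m^{-1}\), and stripping one further letter from the right returns \(F_m \delta_m^{-1} \delta_{m+1}^{-1}\), which is the target prefix. Since a palindrome stays a palindrome when one deletes a matching pair of letters from its two ends, it suffices to check that \(K_m\) both begins and ends with \(\delta_{m+1}\).

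Before this can be carried out, I first record the auxiliary fact that the last two letters of \(F_m\) are \(\delta_{m+1} \delta_m\) (for \(m \geq 1\)); this is also what makes the notation \(F_m \delta_m^{-1} \delta_{m+1}^{-1}\) well defined, as it guarantees that the two removals really delete the trailing letters. This I would prove by a short induction on \(m\) using \(F_m = F_{m-1} F_{m-2}\). The base cases \(F_1 = ab\) and \(F_2 = aba\) are immediate. For \(m \geq 3\) the suffix \(F_{m-2}\) has length \(f_{m-2} \geq 2\), so the last two letters of \(F_m\) coincide with those of \(F_{m-2}\), which by the induction hypothesis are \(\delta_{m-1} \delta_{m-2}\); since \(m-1\) and \(m+1\) share a parity, as do \(m-2\) and \(m\), these equal \(\delta_{m+1} \delta_m\), as claimed.

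With this in hand the rest is immediate. The leading letter of \(K_m\) is \(\delta_{m+1}\) by construction, and its trailing letter is the second-to-last letter of \(F_m\) (the last letter \(\delta_m\) having been deleted), which the auxiliary fact identifies as \(\delta_{m+1}\). By the first item of \Cref{lemma:propkernels}, \(K_m\) is a palindrome, so writing \(K_m = \delta_{m+1} \, u \, \delta_{m+1}\) with \(u = F_m \delta_m^{-1} \delta_{m+1}^{-1}\) and comparing \(K_m\) with its reversal forces \(u\) to be a palindrome, which is exactly the assertion.

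The argument is almost entirely bookkeeping; the only step that carries any content is the auxiliary claim on the last two letters of \(F_m\), and even there the induction is routine once the parity matching \(\delta_{m-1} = \delta_{m+1}\), \(\delta_{m-2} = \delta_m\) is noticed. An alternative, self-contained route would prove \(F_m \delta_m^{-1} \delta_{m+1}^{-1}\) to be a palindrome directly by induction through the morphism \(\varphi\), but routing through the already-established palindromicity of \(K_m\) keeps the proof short.
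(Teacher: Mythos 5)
Your proof is correct and follows essentially the same route as the paper: both deduce the claim from the palindromicity of the kernel word \(K_m = \delta_{m+1} F_m \delta_m^{-1}\) (item 1 of \Cref{lemma:propkernels}) by stripping the matching outer letters \(\delta_{m+1}\) from both ends. The paper states this in one sentence; your additional inductive verification that the last two letters of \(F_m\) are \(\delta_{m+1}\delta_m\) is a correct piece of bookkeeping that the paper leaves implicit.
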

\begin{proof}
    By the first property of \Cref{lemma:propkernels}, \(\delta_{m+1} F_m \delta_{m}^{-1}\) is a palindrome, thus removing \( \delta_{m+1}\) on both ends also yields a palindrome.
\end{proof}

In a note from 2015, Fici \cite{fibnote} states a connection between singular words and the Lempel-Ziv factorization of the Fibonacci word that turns out to be equivalent to the next lemma, as we show in \Cref{theorem:lzonline}.
While the necessary preliminaries are stated, no proof is given. 

\begin{lemma}\label{lemma:fibmarking}
    On the \(m\)-th Fibonacci word \(F_m, m \geq 3\), \textsc{Lazy} puts a marking on the second to last position, that is, right after the largest palindromic prefix of \(F_m\). 
\end{lemma}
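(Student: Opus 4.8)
The plan is to argue by induction on $m$, exploiting that \textsc{Lazy}'s decisions depend only on the prefix read so far: since $F_{m-1}$ is a prefix of $F_m$, the markings \textsc{Lazy} places while reading the first $f_{m-1}$ positions of $F_m$ coincide with those it places on $F_{m-1}$. I would phrase the inductive claim as: the phrases of \textsc{Lazy} on $F_m$ are exactly the kernel words $K_{-1}, K_0, \dots, K_{m-2}$, so that, using $\lvert K_i\rvert = f_i$, the concatenation $F_\infty = K_{-1}K_0K_1\cdots$, and $\sum_{i=-1}^{m-2} f_i = f_m-1$, the last marking sits at the right end of $K_{m-2}$, namely position $f_m-1$. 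The small cases (say $m\le 4$) are checked by hand.

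For the inductive step, the hypothesis places the last marking of the shared prefix at $f_{m-1}-1$, so when \textsc{Lazy} continues into the block $F_{m-2}$ the current unfinished phrase starts at position $f_{m-1}$. By the decomposition above this phrase should be $K_{m-2}=T[f_{m-1},f_m-1]$, and from \Cref{lemma:fibpalindrome} and the palindromicity in the first item of \Cref{lemma:propkernels} I would first record the palindromic form $K_{m-2}=\delta_{m-1}P_{m-2}\delta_{m-1}$, where $P_{m-2}$ is the palindromic prefix of $F_{m-2}$ of length $f_{m-2}-2$. The key simplification is that novelty is monotone under extension: if a factor starting at $f_{m-1}$ has an earlier occurrence, then so does every shorter such factor. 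Hence it suffices to verify two statements, after which \textsc{Lazy} must mark exactly at $f_m-1$: (i) $K_{m-2}$ is novel at time $f_m-1$, and (ii) its length-$(f_{m-2}-1)$ prefix $s=\delta_{m-1}P_{m-2}$ is not novel.

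Claim (i) is immediate from the third item of \Cref{lemma:propkernels}: the largest kernel word in $F_m$ is $K_{m-2}$ and it occurs only once, so in particular it occurs only once in $T[1,f_m-1]$. For (ii) I would exhibit a second occurrence of $s$ strictly to the left of $f_{m-1}$. Reading off the kernel decomposition, $s$ equals $K_{m-4}$ followed by $K_{m-3}$ with its last letter deleted, so the natural candidate is the copy beginning at the start of $K_{m-4}$, i.e.\ at position $f_{m-3}$; equivalently, one wants $F_\infty[j]=F_\infty[j+f_{m-2}]$ for all $j\in[f_{m-3},f_{m-1}-2]$. I would derive this from the fact that the palindromic prefix of $F_m$ of length $f_m-2$ admits $f_{m-2}$ as a period, which in turn follows from the self-similar factorization $F_m=F_{m-2}F_{m-3}F_{m-2}$ together with the analogous period $f_{m-3}$ of $F_{m-2}$. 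This places a genuine earlier occurrence of $s$ inside $F_{m-1}$ and proves non-novelty.

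Finally, (i) and (ii) together show that the shortest novel factor beginning at $f_{m-1}$ is exactly $K_{m-2}$, so \textsc{Lazy} places its next and last marking on $F_m$ at position $f_m-1$; since the palindromic prefix of \Cref{lemma:fibpalindrome} has length $f_m-2$, this marking is precisely the one directly after it, completing the induction and the lemma. I expect the main obstacle to be (ii): whereas (i) is a one-line citation, establishing the earlier occurrence requires the period-$f_{m-2}$ property of the Fibonacci prefix, whose valid range extends beyond $F_{m-1}$ and therefore needs careful bookkeeping with the recursive block structure, or a small auxiliary induction on the periods $f_{m-2},f_{m-3},\dots$, rather than the one-level identity $F_m=F_{m-1}F_{m-2}$ alone.
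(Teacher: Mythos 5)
Your proposal is correct and follows the same inductive skeleton as the paper: the completed phrases of \textsc{Lazy} are exactly the kernel words, novelty of \(K_{m-2}\) at position \(f_m-1\) comes from items 2 and 3 of \Cref{lemma:propkernels}, and the crux is showing that the length-\((f_{m-2}-1)\) prefix of \(K_{m-2}\) is not novel, after which monotonicity of non-novelty under taking prefixes finishes the step. Where you genuinely differ is in the mechanism for that crucial non-novelty claim. The paper stays entirely within palindromes: \(F_m[1,f_m-2]\) is a palindrome ending in \(w\), hence beginning with \(\overleftarrow{w}\); since \(w\) is short enough, \(\overleftarrow{w}\) is also a prefix of the shorter palindrome \(F_m[1,f_{m-1}-2]\), which therefore ends in \(w\), exhibiting an earlier occurrence inside \(F_{m-1}\). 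You instead invoke the period \(f_{m-2}\) of the prefix of length \(f_m-2\) and shift the occurrence left by one period, landing at position \(f_{m-3}\) --- which is in fact the very same occurrence the paper finds (the suffix of \(F_m[1,f_{m-1}-2]\) of length \(f_{m-2}-1\) starts at \(f_{m-1}-f_{m-2}=f_{m-3}\)). The trade-off is exactly as you anticipate: the palindrome route gets the earlier occurrence in two lines from \Cref{lemma:fibpalindrome}, which is itself immediate from the palindromicity of kernel words, whereas the period property \(F_\infty[j]=F_\infty[j+f_{m-2}]\) for \(j\le f_{m-1}-2\) is true but requires its own proof (e.g.\ via \(F_m=F_{m-2}F_{m-3}F_{m-2}\) and the fact that \(F_{m-2}F_{m-3}\) and \(F_{m-3}F_{m-2}\) agree except on their last two letters), and this is the one piece of your outline not yet nailed down. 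Two cosmetic points: phrase the inductive claim in terms of \emph{completed} phrases, since after the marking at \(f_m-1\) a trailing unfinished phrase of length one remains; and the identification of \(s\) with \(K_{m-4}\) followed by \(K_{m-3}\) minus its last letter is motivation you do not actually need once the period argument is in place.
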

\begin{proof}
    We prove the statement by induction.
    \(F_3 = \underline{a}\underline{b}a\underline{a}b\) exhibits the desired behavior.
    We assume that \(F_{m-1}\) is already marked as described and \(m \geq 4\).
    Then, \(F_m[f_{m-1}, f_m - 1] = K_{m-2}\) is the substring that has appeared since the last marking.
    It is the largest kernel word contained in \(F_m\) and thus appears only once by properties 2 and 3 of \Cref{lemma:propkernels}.
    Therefore, \textsc{Lazy} needs to put a marking within that substring.
    
    On the other hand, the substring \(w = F_m[f_{m-1} - 1,f_m - 2] = K_{m-2}\delta_{m-1}^{-1}\) already appears before in \(F_{m-1}\).
    Thus, \textsc{Lazy} does not put an earlier marking.
    To see that, we apply \Cref{lemma:fibpalindrome} which tells us that \(F_m[1,f_{m}- 2]\) is a palindrome ending in \(w\), thus starting in its reversal \(\overleftarrow{w}\).
    \(F_m[1,f_{m-1} - 2]\) has length \(f_{m-1} - 2 = f_{m-2} + f_{m-3} -2\) whereas \(w = F_m[f_{m-1}, f_m - 2]\) is shorter than \(f_m - 2 - f_{m-1} + 1 = f_{m-2} - 1\) because \(m \geq 4\) implies \(f_{m-3} \geq 2\). 
    Therefore, \(F_m[1, f_{m-1} -2]\) has that same prefix \(\overleftarrow{w}\) and is again a palindrome due to \Cref{lemma:fibpalindrome}. 
    This implies that the suffix of \(F_m[1, f_{m-1} -2]\) is equal to \(w\).
    Consider \Cref{figure:fibmarking} for a visualization of this proof.
    \(F_m[1, f_{m-1} -2]\) is a prefix of \(F_{m-1}\), thus \(w\) and all its substrings are already covered according to our induction hypothesis.
\end{proof}

\begin{theorem}[Online cost of Fibonacci words]\label{theorem:fibcost}
    For \(m \geq 3\), \textsc{Lazy} has a cost of \(m\) on \(F_m\).
    The marked positions are \(\{1,2,4,7,12, \dots\} = \{1,2\} \cup \{f_m - 1 \mid m \geq 3\}\).
    As the optimal solution has size 2 \cite{combinatorialView, episturmian}, the resulting competitive ratio is \(m/2 = f^{-1}(n)/2 \in \mathcal{O}(\log(n))\). 
\end{theorem}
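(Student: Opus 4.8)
The plan is to prove all three assertions by a single induction on $m$, using \Cref{lemma:fibmarking} as the engine that accounts for exactly one new marking at each level. The structural fact I would lean on throughout is that \textsc{Lazy} is deterministic and causal: its decision at position $i$ depends only on the prefix $T[1,i]$. Since $F_{m-1}$ is a prefix of $F_m$, running \textsc{Lazy} on $F_m$ reproduces verbatim its run on $F_{m-1}$ during the first $f_{m-1}$ steps, so every marking placed on $F_{m-1}$ is inherited by $F_m$, and the only thing left to analyse is the behaviour while reading the trailing block $F_m[f_{m-1}+1, f_m]$. The induction hypothesis I would carry is therefore the full packaged claim: on $F_{m-1}$, \textsc{Lazy} has cost $m-1$ with marking set $\{1,2\} \cup \{f_k - 1 : 3 \le k \le m-1\}$, whose largest element is $f_{m-1}-1$.

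For the base case I would simulate \textsc{Lazy} directly on $F_3 = abaab$: the two distinct letters force markings at positions $1$ and $2$, the first novel substring thereafter is $aa$ forcing a marking at position $4 = f_3 - 1$, and position $5$ carries the already-seen letter $b$ and is left unmarked. This gives cost $3$ and marking set $\{1,2,4\} = \{1,2\} \cup \{f_3-1\}$. For the inductive step ($m \ge 4$), \Cref{lemma:fibmarking} tells me that, reading past position $f_{m-1}-1$, \textsc{Lazy} places exactly one further marking, at $f_m - 1$: its proof shows both that a marking is forced inside the unique occurrence of the largest kernel word $K_{m-2} = F_m[f_{m-1}, f_m - 1]$ (by properties 2 and 3 of \Cref{lemma:propkernels}) and that no marking is triggered earlier, because the preceding substring has already occurred in $F_{m-1}$ (via the palindrome structure of \Cref{lemma:fibpalindrome}). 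The sole remaining position $f_m$ carries the single repeated letter $\delta_m$ and is not marked. Hence the cost rises by exactly one to $m$ and the marking set becomes $\{1,2\} \cup \{f_k - 1 : 3 \le k \le m\}$, which is $\{1,2,4,7,12,\dots\}$ as claimed.

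Finally, for the competitive ratio I would combine this cost with the optimal offline attractor size $\gamma^* = 2$ of a Fibonacci word \cite{combinatorialView, episturmian}, giving ratio $m/2$. To express this in terms of $n = |F_m| = f_m$, I invoke the closed form $f_m = \lfloor \phi^{m-2}/\sqrt{5} \rceil$ stated earlier, which yields $m = \log_\phi(\sqrt{5}\,n) + \mathcal{O}(1) = \Theta(\log n)$, so the ratio is $m/2 = f^{-1}(n)/2 \in \mathcal{O}(\log n)$, and in fact $\Omega(\log n)$, matching the upper bound from \Cref{theorem:lzonline}.

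The step I expect to be the most delicate is the bookkeeping that guarantees \emph{exactly one} new marking per level: that no spurious marking occurs strictly between $f_{m-1}-1$ and $f_m - 1$, and none at the final position $f_m$. This is precisely the content already dispatched by \Cref{lemma:fibmarking}, so once that lemma is in hand the theorem reduces to a clean induction; the only genuinely new ingredients are the prefix-inheritance observation and the translation of the Fibonacci index $m$ into $\log n$ via the golden-ratio growth of $f_m$.
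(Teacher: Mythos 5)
Your proposal is correct and follows essentially the same route as the paper: a base case computation on $F_3 = abaab$ giving cost $3$ with markings $\{1,2,4\}$, followed by an induction in which \Cref{lemma:fibmarking} supplies exactly one new marking at position $f_m - 1$ per application of the morphism. The paper's proof is just a two-line version of yours; the extra details you supply (prefix-inheritance of \textsc{Lazy}'s run, the unmarked final position $f_m$, and the conversion $m = \Theta(\log n)$ via the golden-ratio growth of $f_m$) are all correct and left implicit in the paper.
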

\begin{proof}
    \(F_3 = \underline{a}\underline{b}a\underline{a}b\) has a cost of 3.
    By \Cref{lemma:fibmarking}, each additional increment of the Fibonacci morphism adds a cost of 1.
\end{proof}

\begin{figure}
    \begin{tikzpicture}
        \draw (-0.5,0) -- (10.5,0);
        \draw (0,0.125) -- (0,-0.125) node[below]{\(\varepsilon\)};
        \draw (4,0.125) -- (4,-0.125) node[below]{\(F_{m-2}\)};
        \draw (6,0.125) -- (6,-0.125) node[below]{\(F_{m-1}\)};
        \draw (9,0.125) -- (9,-0.125) node[below]{\(F_{m}\)};

        \draw (0,0.5 * 0 + 0.25) rectangle +(3,0.5) node[midway]{\(\overleftarrow{w}\)};
        \draw (2.5,0.5 * 0 + 0.25) rectangle +(3,0.5) node[midway]{\(w\)};
        \draw (5.75,0.5 * 0 + 0.25) rectangle +(3,0.5) node[midway]{\(w\)};

        \draw (0,1.25) -- (5.5,1.25) node[midway, above]{\(F_{m-1}\delta_{m-1}^{-1}\delta_{m}^{-1}\) (palindrome)};
        \draw (0,1.25) -- (0,1.25-0.125);
        \draw (5.5,1.25) -- (5.5,1.25-0.125);
        \draw (0,2) -- (8.75,2) node[midway, above]{\(F_{m}\delta_{m}^{-1}\delta_{m+1}^{-1}\) (palindrome)};
        \draw (0,2) -- (0,2-0.125);
        \draw (8.75,2) -- (8.75,2-0.125);
    \end{tikzpicture}
    \caption{\(w = F_m[f_{m-1}, f_m - 2]\) appears in \(F_{m-1}\).}
    \label{figure:fibmarking}
\end{figure}
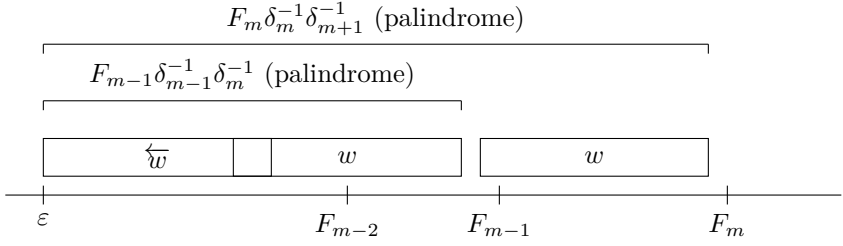

\subsection{Thue-Morse Words}

We now turn towards another family of words that are aperiodic, yet have a constant-size optimal attractor.
\emph{Thue-Morse words} are recursively defined on a binary alphabet \(\{a,b\}\), with \(G_0 = a, G_m = G_{m-1}\overline{G_{m-1}}\) where \(\overline{T}\) denotes the negation of the string \(T\).
They are also the result of iteratively applying the morphism \(\psi(a) = ab, \psi(b) = ba\). 
Starting at \(G_0 = a\), we get \(G_m = \psi^m(a)\), and it holds that \(G_{m}\) is a prefix of \(G_{m+1}\) for all \(m \geq 0\).
The infinite \emph{Thue-Morse sequence} \(G_\infty\) is defined as the fixed point of this morphism. 
The length of \(G_m\) is \(2^m\).

The complexity function of the Thue-Morse sequence is slightly larger than that of Sturmian words, but still linear.
Kutsukake et al. \cite{thuemorseattractor} show that every Thue-Morse word \(G_m\) has a string attractor of size 4, and this bound is optimal.
We first make a structural observation on the novel substrings in the Thue-Morse sequence in order to compute the cost of \textsc{Lazy} on the \(m\)-th Thue-Morse word.

\begin{lemma}\label{lemma:tmfirstapp}
    For any substring \(w\) not of the form \(a^i\) or \(b^i\) appearing at position \(x\) for the first time in \(G_\infty\), i.e. \(w = G_\infty[x, x+|w|-1]\) and for all \(y < x\), \(w \neq G_\infty[y, y+|w|-1]\), \(\psi(w)\) appears for the first time at position \(2x-1\), i.e. \(\psi(w) = G_\infty[2x-1, 2x+2|w|]\) and for all \(y < 2x-1\), \(\psi(w) \neq G_\infty[y, y+2|w|-1]\).
\end{lemma}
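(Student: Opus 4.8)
The plan is to exploit that $G_\infty$ is the fixed point of $\psi$, so $G_\infty = \psi(G_\infty)$ factors uniquely into the blocks $\psi(a) = ab$ and $\psi(b) = ba$, where the block corresponding to original position $p$ occupies positions $[2p-1, 2p]$. The existence half is then immediate: since $w = G_\infty[x, x+|w|-1]$, applying $\psi$ to this occurrence yields $\psi(w) = \psi(G_\infty[x, x+|w|-1]) = G_\infty[2x-1, 2x+2|w|-2]$, using $\psi(G_\infty) = G_\infty$. (I note in passing that the right endpoint of the first display should read $2x+2|w|-2$, since $|\psi(w)| = 2|w|$.) All the work therefore lies in showing that no occurrence of $\psi(w)$ can start strictly before position $2x-1$.

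The hard part, and the crux of the argument, is a synchronization claim: every occurrence of $\psi(w)$ in $G_\infty$ must be block-aligned, i.e.\ must start at an odd position. First I would record the elementary observation that the factors $aa$ and $bb$ occur in $G_\infty$ only straddling a block boundary: each block is $ab$ or $ba$ and hence contains no repeated letter, so whenever $G_\infty[p] = G_\infty[p+1]$ the positions $p, p+1$ cannot lie in a common block, forcing $p$ to be even. Next I would use that $w$ is neither $a^i$ nor $b^i$, so $w$ contains two adjacent distinct letters $w_i \neq w_{i+1}$; the corresponding blocks of $\psi(w)$ have distinct types, and a direct check of the two cases ($abba$ and $baab$) shows their junction produces a repeat $aa$ or $bb$ inside $\psi(w)$, located across the internal boundary between its $i$-th and $(i+1)$-th block. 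If $\psi(w)$ occurred at some position $y$, this internal repeat would sit at global positions $(y+2i-1, y+2i)$; by the observation its left index $y+2i-1$ must be even, which forces $y$ to be odd. Hence every occurrence of $\psi(w)$ starts at an odd position $y = 2x'-1$.

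Finally I would close the loop by desubstitution. An occurrence of $\psi(w)$ at $2x'-1$ is block-aligned, so reading off the blocks at positions $[2x'-1, 2x'+2|w|-2]$ and inverting $\psi$ recovers an occurrence $w = G_\infty[x', x'+|w|-1]$. Combining this with the existence direction: if $\psi(w)$ appeared at some $y < 2x-1$, then $y = 2x'-1$ with $x' < x$ and $w$ would occur at $x'$, contradicting that $x$ is the first occurrence of $w$. Therefore $2x-1$ is the leftmost occurrence of $\psi(w)$, as claimed. I expect the only genuine obstacle to be the synchronization step together with the parity bookkeeping; the restriction that $w$ be non-unary is exactly what supplies the internal repeat driving that step, and it is worth remarking why it cannot be dropped, since $\psi(a^i) = (ab)^i$ carries no such repeat and can occur at even positions (already $\psi(a) = ab$ occurs at the even position $4$ of $G_\infty = abbabaab\cdots$).
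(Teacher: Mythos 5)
Your proposal is correct and follows essentially the same route as the paper's proof: existence of the occurrence at $2x-1$ via the fixed-point/block structure of $\psi$, and minimality via the observation that the non-unary hypothesis forces an internal $aa$ or $bb$ in $\psi(w)$, which can only straddle a block boundary and hence synchronizes every occurrence of $\psi(w)$ to an odd starting position. Your write-up is if anything slightly more complete than the paper's (the explicit desubstitution step and the remark on why $\psi(a^i)$ escapes the argument are left implicit there), and your correction of the right endpoint to $2x+2|w|-2$ is accurate.
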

\begin{proof}
    We first show that \(\psi(w)\) actually appears at \(2x-1\).
    The prefix \(G_\infty[1,x-1]\) is mapped a string of double that length which forms the prefix \(G_\infty[1,2x-2]\).
    Therefore, \(\psi(w)\) starts at \(2x-1\).
    In general, the \(x\)-th position of \(G_i\) is mapped to positions \(2x-1\) and \(2x\) in \(\psi(G_i) = G_{i+1}\).

    Now we prove that \(\psi(w)\) does not appear earlier in \(G_\infty\).
    As \(w\) is not of the form \(a^i\) or \(b^i\), it contains the substring \(ab\) or \(ba\) at some position \(x\), that is, \(w[x,x+1] = ab\).
    We continue the proof for the substring \(ab\), the case \(ba\) is proven symmetrically.
    The substring \(w[x,x+1] = ab\) is then mapped to \(\psi(w)[2x-1, 2x+3] = abba\), which contains the substring \(bb\) at an even position \(2x\).
    The only way for \(\psi(w)\) to appear not as a mapping of \(w\) is as a mapping of another word \(w'\) of length \(|w| + 1\) that is mapped \(\psi(w') = xwy\) for some \(x, y \in \{a, b\}\).
    However, this cannot happen as the substring \(bb\) then starts on an odd position \(2x+1\) in \(\psi(w')\),
    so \(\psi(w')[2x+1, 2x+2] = bb\).
    This is the result of the mapping \(\psi(w'[x+1]) = \psi(w')[2x+1, 2x+2]\) of a single position \(x+1\) in \(w'\),
    which by definition of \(\psi\) can only be \(ab\) or \(ba\).
\end{proof}

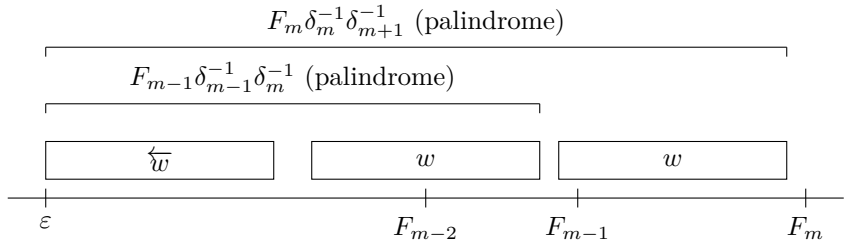
\begin{figure}
    \begin{tikzpicture}
        \draw (-0.5,0) -- (10.5,0);
        \draw (0,0.125) -- (0,-0.125) node[below]{\(\varepsilon\)};
        \draw (5,0.125) -- (5,-0.125) node[below]{\(F_{m-2}\)};
        \draw (7,0.125) -- (7,-0.125) node[below]{\(F_{m-1}\)};
        \draw (10,0.125) -- (10,-0.125) node[below]{\(F_{m}\)};

        \draw (0,0.5 * 0 + 0.25) rectangle +(3,0.5) node[midway]{\(\overleftarrow{w}\)};
        \draw (3.5,0.5 * 0 + 0.25) rectangle +(3,0.5) node[midway]{\(w\)};
        \draw (6.75,0.5 * 0 + 0.25) rectangle +(3,0.5) node[midway]{\(w\)};

        \draw (0,1.25) -- (6.5,1.25) node[midway, above]{\(F_{m-1}\delta_{m-1}^{-1}\delta_{m}^{-1}\) (palindrome)};
        \draw (0,1.25) -- (0,1.25-0.125);
        \draw (6.5,1.25) -- (6.5,1.25-0.125);
        \draw (0,2) -- (9.75,2) node[midway, above]{\(F_{m}\delta_{m}^{-1}\delta_{m+1}^{-1}\) (palindrome)};
        \draw (0,2) -- (0,2-0.125);
        \draw (9.75,2) -- (9.75,2-0.125);
    \end{tikzpicture}
    \caption{\(w = F_m[f_{m-1} - 1,f_m - 2]\) appears in \(F_{m-1}\)}
    \label{figure:tmfirstapp}
\end{figure}

\begin{theorem}\label{theorem:tmcost}
    \textsc{Lazy} has a cost of \(2m-2\) on \(G_m\) for \(m \geq 3\).
\end{theorem}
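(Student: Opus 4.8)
The plan is to argue by induction on $m$, using crucially that \textsc{Lazy} is an online algorithm. Since $G_m = G_{m-1}\overline{G_{m-1}}$, the decisions \textsc{Lazy} makes on the first $2^{m-1}$ positions depend only on the prefix $G_{m-1}$, so the markings placed there are exactly those \textsc{Lazy} places when run on $G_{m-1}$ alone; by the inductive hypothesis these number $2(m-1)-2 = 2m-4$. I verify the base case directly: on $G_3 = abbabaab$ the algorithm marks positions $1,2,4,7$, for a cost of $4 = 2\cdot 3 - 2$. The whole theorem therefore reduces to the single claim that \textsc{Lazy} places \emph{exactly two} further markings while reading the second half $\overline{G_{m-1}}$, i.e.\ on positions $2^{m-1}+1,\dots,2^m$.

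The engine for analysing the second half is \Cref{lemma:tmfirstapp}. Because $G_m = \psi(G_{m-1})$, that lemma says the first occurrences of $G_m$ are the $\psi$-images, at the doubled positions $x\mapsto 2x-1$, of the first occurrences of $G_{m-1}$, together with a controlled set of factors that are misaligned with the image boundaries (the ones carrying a fresh $aa$ or $bb$). This doubling is already visible in the data: the first of the two new markings sits at $11,21,41$ for $m=4,5,6$, obeying exactly the map $p\mapsto 2p-1$ of the lemma. Combined with the closure of the Thue--Morse factor set under complementation (which governs factors lying wholly inside $\overline{G_{m-1}}$) and the known linear appearance/recurrence behaviour of $G_\infty$, the lemma lets me decide, for each successive pending extension read in the second half, whether it is novel or recurs, and hence to locate where \textsc{Lazy} completes a phrase.

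Concretely I would track the pending (as yet unmarked) phrase with which \textsc{Lazy} crosses the centre: it is a suffix $u = G_{m-1}[s,2^{m-1}]$ of $G_{m-1}$ that recurs in $G_{m-1}$ (this is precisely why position $2^{m-1}$ carries no marking). I then show that reading on into $\overline{G_{m-1}}$ this pending phrase completes to the first novel substring — the first new marking, at the doubled position above — and that exactly one further phrase completes before the remaining suffix of $G_m$ again recurs and is left pending unmarked. The step thus contributes precisely two, giving $c_m = (2m-4)+2 = 2m-2$.

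The main obstacle is exactly this last step: proving that the second half yields two markings, neither more nor fewer. The difficulty is that the pending suffix and the two completing phrases behave irregularly at small scales — the pending-suffix lengths run $1,1,5,11$ and the two new phrase lengths run $4,4$, then $6,6$, then $14,12$ for $m=4,5,6$ — so one cannot simply propagate a fixed-shape invariant, and the clean doubling only stabilises for larger $m$. I expect to need a strengthened inductive hypothesis that records the pending suffix left after $G_{m-1}$ together with the location of its nearest earlier occurrence, enough to run the novelty tests for the long substrings that straddle or follow the central boundary; \Cref{lemma:tmfirstapp}, supplemented by the appearance and recurrence properties of the Thue--Morse sequence, is the tool that makes those tests tractable, and a small number of explicit base cases absorbs the transient behaviour at $m=3,4$.
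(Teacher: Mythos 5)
Your overall strategy matches the paper's: induct on $m$, use the online/prefix property to inherit the $2m-4$ markings on the first half, handle small $m$ explicitly, and use \Cref{lemma:tmfirstapp} to push first occurrences forward under the doubling $x \mapsto 2x-1$. The base case $G_3$ and the observed marking positions are correct. However, the proof has a genuine gap exactly where you flag it: the claim that the second half contributes exactly two markings is never established, only reduced to a ``strengthened inductive hypothesis'' about the pending suffix and its nearest earlier occurrence that you do not formulate or verify. As your own data show (pending-suffix lengths $1,1,5,11$ and new-phrase lengths $4,4$ / $6,6$ / $14,12$), tracking the exact phrase boundaries across the centre does not propagate cleanly, so as written the induction cannot be closed.

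The paper avoids this by splitting the claim into two independent halves, neither of which requires knowing the phrase boundaries. For the upper bound, $\overline{G_{m-1}} = \overline{G_{m-2}}\,G_{m-2}$ and both factors already occur in $G_{m-1} = G_{m-2}\overline{G_{m-2}}$; a block with an earlier occurrence can contain at most one marking (once a phrase starts inside it, every extension up to the block's end is a repeat), so at most two new markings arise per iteration. For the lower bound, instead of the pending phrase one tracks two \emph{fixed, explicit, disjoint novel substrings} per level, $w_1(i) = G_\infty[15\cdot 2^i+1,\,21\cdot 2^i]$ and $w_2(i) = G_\infty[21\cdot 2^i+1,\,27\cdot 2^i]$, seeded by the substrings forcing the last two markings of $G_5$; \Cref{lemma:tmfirstapp} shows their $\psi$-images are again novel at the doubled positions, they lie entirely beyond all earlier markings, and being disjoint they force two distinct new markings. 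Note these witnesses need not coincide with the phrases themselves (for $m=6$ the marking lands at position $41$ while $w_1(1)$ occupies $[31,42]$), which is precisely why decoupling the argument from the phrase structure is what makes the induction go through. To repair your proof you would either need to adopt this witness-based formulation or actually supply and verify the strengthened invariant on the pending suffix, including the transient cases $m=4,5$.
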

\begin{proof}
    We first show that each iteration of the Thue-Morse morphism adds a cost of at most two, similar to the proof of Theorem 5 in \cite{thuemorseattractor}.
    By definition, \(G_m = G_{m-1}\overline{G_{m-1}} = G_{m-2}\overline{G_{m-2}}\overline{G_{m-2}}G_{m-2}\).
    There are at most two marked positions in \(\overline{G_{m-1}} = \overline{G_{m-2}}G_{m-2}\) as both \(\overline{G_{m-2}}\) and \(G_{m-2}\) have earlier occurrences in \(G_{m-1}\).
    Thus, the cost of \(G_m\) differs by at most 2 from the cost of \(G_{m-1}\).
    \[G_5 = \underline{a}.\underline{b}.b\underline{a}.ba\underline{a}b.ba\underline{a}bab\underline{b}a.baab\underline{a}bbaab\underline{b}abaab\] is the fifth Thue-Morse word as covered by \textsc{Lazy}.
    This shows the cost of \textsc{Lazy} on \(G_m\) for \(3 \leq m \leq 5\).
    For \(m \geq 5\), we now prove by induction that each iteration of the Thue-Morse morphism indeed adds a cost of two.
    Our precise inductive claim is defined on the inductive variable \(i = m-5\) as follows. 
    The substrings \(w_1(i) = G_\infty[15 \cdot 2^i + 1, 21 \cdot 2^i]\) and \(w_2(i) = G_\infty[21 \cdot 2^i + 1, 27 \cdot 2^i]\) are novel, appear in \(G_{i+5}\) but not fully in \(G_{i+4}\) and are not covered by previous markings.
    We consider the novel substrings that forced the last two markings in \(G_5\), i.e. \(G_5[16,21] = abaaba\) and \(G_5[22,27] = bbaabb\), both of length six. 
    These substrings fulfill the claim for \(i=0\) and thus form the base of our induction.
    
    Assuming the inductive claim holds for some fixed \(i \in \mathbb{N}\), we apply \Cref{lemma:tmfirstapp} to the two substrings \(w_1(i)\) and \(w_2(i)\), thus we obtain that \(w_1 (i+1) = \psi(w_1)\) is a new substring appearing at position \(2(15 \cdot 2^i + 1) -2 = 15 \cdot 2^{i+1} + 1\) and \(w_2 (i+1) = \psi(w_2)\) is a new substring appearing at position \(2(21 \cdot 2^i + 1) -2 = 21 \cdot 2^{i+1} + 1\).
    Clearly, the two substrings still fulfill the condition that they are not of the form \(a^i\) or \(b^i\).
    Further \(w_1 (i+1)\) starts at position \(15 \cdot 2^{i+1} + 1 > 26 \cdot 2^i + 1\) and is thus not covered by any previous markings, and so is \(w_2\).
    As we have argued above, there are at most two markings per iteration and thus the prefix of \(w_1 (i+1)\) that is also a suffix of \(G_i\) is not a novel substring. 
    \(w_2 (i+1)\) ends at \(26 \cdot 2^{i+1} + 1 \leq 2^{(i+1)+5}\) and is thus fully contained in \(G_{i+1}\), and so is \(w_1\).
    In conclusion, \textsc{Lazy} needs two different markings to cover \(w_1(i)\) and \(w_2(i)\) for each \(i \in \mathbb{N}\).
\end{proof}

As an example, we consider the first step of the induction from \(m = 5\) to \(m + 1 = 6\).
Remember the novel substrings that forced the last two markings in \(G_5\): \(G_5[16,21] = abaaba\) and \(G_5[22,27] = bbaabb\). 
Applying the Thue-Morse morphism \(\psi\) to those yields \(\psi(abaaba) = abbaababbaab\) and \(\psi(bbaabb) = babaababbaba\).
These strings appear at \(G_5[31,42]\) and \(G_5[43,54]\) respectively, and these are the first appearances of those two strings according to \Cref{lemma:tmfirstapp}.
Then, \textsc{Lazy} needs at least two additional markings going from \(G_5\) to \(G_6\).

\subsection{Conclusions for the Unrestricted Case}

On the Fibonacci words, we obtain a competitive ratio of \(\log_{\phi}(n)/2\).
The competitive ratio of \textsc{Lazy} on Thue-Morse words is \((2m-2)/4 = (\log_2(n)-1)/2\).
As \(\log_2(n)\) is smaller than \(\log_{\phi}(n)\) by a factor \(\log_2(\phi) \approx 0.694\), the Fibonacci sequence yields a stronger bound.

The factors of general Sturmian and episturmian words are not as well understood as for the specific Fibonacci word.
We conjecture that for every episturmian sequence, \textsc{Lazy} puts a marking right after each palindromic prefix, after some initial offset to cover the elements of \(\Sigma\). 
Directive sequences can be used to show that the family of Fibonacci words grows slower than all other Sturmian words, that is, the Fibonacci sequence has a higher density of palindromic prefixes and would thus yield the best competitive ratio if the conjecture holds.

We combine \Cref{theorem:lzonline} and Corollary 3.15 from Kempa and Prezza \cite{roots} for a bound on the competitive ratio.

\begin{theorem}\label{theorem:strcomp}
    The greedy algorithm \textsc{Lazy} for the online string attractor problem is \(\mathcal{O}(\log(n))\)-competitive.
\end{theorem}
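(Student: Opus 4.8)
The plan is to reduce the statement to the known relationship between the Lempel--Ziv factorization size and the optimal attractor size, using the equivalence established in \Cref{theorem:lzonline}. First I would observe that on any input \(T \in \Sigma^n\) the cost of \textsc{Lazy} equals the size \(z\) of the self-referencing, novel Lempel--Ziv factorization of \(T\): \Cref{theorem:lzonline} says \textsc{Lazy} computes exactly this factorization, and it places one marking per phrase. On the other side, the offline optimum is a minimum string attractor of the full string \(T[1,n]\), whose size is by definition \(\gamma^*\); since any online solution must in particular be a valid attractor at the end, \(\gamma^*\) is indeed the benchmark. Hence the competitive ratio is precisely \(z/\gamma^*\) (up to the additive constant permitted in the definition of competitiveness), and it suffices to bound this quotient.

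Second, I would apply the known bound relating factorization size to attractor size. Combining \Cref{theorem:lzonline} with Corollary 3.15 of Kempa and Prezza \cite{roots} already bounds \textsc{Lazy}'s cost in terms of \(\gamma^*\); to secure a \emph{single} logarithmic factor (rather than the \(\log^2\) factor) I would invoke the refinement \(z = \mathcal{O}(\gamma^* \log(n/\delta))\) of Kociumaka, Navarro, and Prezza \cite{relativeSubstringComplexity}, which gives \(z = \mathcal{O}(\gamma^* \log n)\) because the relative substring complexity \(\delta\) is at least \(1\). Dividing by \(\gamma^*\) then yields \(z/\gamma^* = \mathcal{O}(\log(n/\gamma^*)) = \mathcal{O}(\log n)\), the claimed ratio; since \(\gamma^* \geq 1\) for every nonempty input, no additive term beyond the constant hidden in the \(\mathcal{O}\)-notation is required.

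The step requiring the most care, and the one I would treat as the main obstacle, is matching the Lempel--Ziv variant: the cited \(z\)-bounds are stated for a standard factorization (often non-self-referencing), whereas \textsc{Lazy} produces the self-referencing, novel variant. Here I would use the remark preceding \Cref{table:lzexamples} that the original self-referencing, novel factorization is the smallest among the four variants. Writing \(z_{\textsc{Lazy}}\) for \textsc{Lazy}'s size and \(z_{\mathrm{var}}\) for the size of whichever variant the external bound is phrased for, we have \(z_{\textsc{Lazy}} \leq z_{\mathrm{var}} \leq \mathcal{O}(\gamma^* \log n)\), so the upper bound transfers directly no matter which variant is used in the cited results. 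Everything else is a direct substitution, and the Fibonacci family of \Cref{theorem:fibcost} confirms that the resulting \(\mathcal{O}(\log n)\) cannot be improved beyond constant factors.
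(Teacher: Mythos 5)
Your proposal is correct and takes essentially the same route as the paper, which proves this theorem precisely by combining \Cref{theorem:lzonline} (i.e., \textsc{Lazy}'s cost equals the size \(z\) of the self-referencing, novel Lempel--Ziv factorization) with the known bound on \(z\) in terms of \(\gamma^*\) from Kempa and Prezza, refined to a single logarithm via Kociumaka, Navarro, and Prezza. Your extra care about which Lempel--Ziv variant the cited bounds apply to is handled in the paper by the remark that the self-referencing, novel factorization is the smallest of the four variants, so the bound transfers exactly as you argue.
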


We can also combine \Cref{theorem:lzonline,theorem:fibcost} to immediately obtain a lower bound on the performance of Lempel-Ziv compression as an approximation to the string attractor problem.
Note that this result is based on the Fibonacci word with an optimal attractor of constant size, so a bound taking \(\gamma^*\) into account as in \Cref{theorem:strcomp} might be more accurate.
In fact, it is not possible for the lower bound to hold for large optimal attractors as this would exceed the upper bound.

\begin{theorem}
    The approximation guarantee of a Lempel-Ziv-based compression algorithm for the string attractor problem is at most \(\Omega(\log(n))\).
\end{theorem}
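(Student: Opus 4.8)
The plan is to convert the competitive-ratio bound of \Cref{theorem:fibcost} into a statement about Lempel-Ziv compression by way of the equivalence in \Cref{theorem:lzonline}. Concretely, I would use the family of Fibonacci words as a witness on which the number of Lempel-Ziv phrases exceeds the size of the optimal string attractor by a logarithmic factor, and then argue that, since the string attractor is a lower bound for every dictionary-compression scheme, this gap is exactly a lower bound on the approximation ratio that Lempel-Ziv can achieve.

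First I would invoke \Cref{theorem:lzonline}, which identifies the cost of \textsc{Lazy} with the number \(z\) of phrases of the self-referencing, novel Lempel-Ziv factorization. Combined with \Cref{theorem:fibcost}, this yields \(z = m\) on the \(m\)-th Fibonacci word \(F_m\) for \(m \geq 3\). Using \(f_m = \lfloor \phi^{m-2}/\sqrt{5} \rceil\), the length \(n = f_m\) of \(F_m\) satisfies \(m = \log_\phi(n) + \mathcal{O}(1)\), so the factorization has \(z = \Theta(\log n)\) phrases.

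Next I would recall that every Fibonacci word admits an optimal string attractor of size \(\gamma^* = 2\) (the episturmian results cited for \Cref{theorem:fibcost}), a constant independent of \(n\). Since \(\gamma^* \leq z\) by the Kempa--Prezza relation, Lempel-Ziv overestimates the optimal attractor, and on \(F_m\) the ratio is \(z/\gamma^* = m/2 = \Theta(\log n)\). As this holds for an infinite family of inputs, no Lempel-Ziv-based compression algorithm can guarantee an approximation factor better than \(\Omega(\log n)\) for the string attractor problem; this also matches the upper bound \(z \leq \mathcal{O}(\gamma^*\log(n/\delta))\) for constant \(\gamma^*\).

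The point to be careful about is the interpretation of \emph{approximation guarantee}: I would make explicit that the quantity compared against \(\gamma^*\) is the compressed size produced by Lempel-Ziv, namely its phrase count, and that because \(\gamma^*\) lower-bounds the output size of every dictionary-compression method, the logarithmic gap on the Fibonacci family is a genuine limitation of Lempel-Ziv as an attractor approximation rather than an artifact of the chosen variant. I would also note, as the surrounding text already does, that such a witness necessarily has constant \(\gamma^*\), since a logarithmic lower bound for large \(\gamma^*\) would contradict the upper bound.
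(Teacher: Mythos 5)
Your proposal is correct and follows exactly the paper's intended argument: the paper derives this theorem by combining \Cref{theorem:lzonline} (identifying the cost of \textsc{Lazy} with the Lempel-Ziv phrase count) with \Cref{theorem:fibcost} (cost \(m\) on \(F_m\) versus an optimal attractor of size 2), which is precisely what you do. Your additional remarks on the interpretation of the approximation guarantee and on the necessity of constant \(\gamma^*\) mirror the paper's own surrounding discussion.
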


We have seen evidence indicating that the performance of \textsc{Lazy}/Lempel-Ziv depends on the size of the optimal attractor, and that the Fibonacci word is a particularly hard instance with a small attractor.
This leads to the conjecture that the Fibonacci word is actually the overall worst-case instance, and we get better competitive ratios for larger attractors, such as \(\mathcal{O}(\log(n/\gamma^*))\).

\section{Limiting the Scope}\label{sec:scope}

In this section, we study the online \(k\)-attractor problem and the online sharp \(k\)-attractor problem for constant \(k\).
We first consider how these problems differ from each other in their definition.
In a set cover interpretation, any marking in an optimal solution for \(k\)-attractor covers at most \(k(k+1)/2\) elements, or \(k\) elements for sharp \(k\)-attractor.
As each marking done by \textsc{Lazy} covers at least one new element, this immediately yields these values as upper bounds for the respective competitive ratios.

In the sharp setting, the adversary has fewer substrings to force costs for the online algorithm.
The optimal solution is of course also subject to fewer conditions, but that does not help much as an efficient \(k\)-attractor is close to a sharp \(k\)-attractor anyway.
Our first result is to show that the competitive ratio of \(k\)-attractor is also bounded by \(k\), thus the adversary has no significant gain from the additional, shorter substrings.

Next, we construct instances for online \(k\)-attractor on which \textsc{Lazy} has a competitive ratio converging to \(k\), showing that this bound is tight. 
Furthermore, this can also be achieved using only substrings of length \(k\), thus yielding the same result for the online sharp \(k\)-attractor problem.
Thus, while the problems are different in their definition, their online versions behave similarly and have the same competitive ratio.

The main result of this section is given by the following theorem, and the remainder of the subsection is devoted to proving it.

\begin{theorem}
    The greedy algorithm \textsc{Lazy} for the online \(k\)-attractor problem for constant \(k\) has a strict competitive ratio of \(k\), and is not \(k-\varepsilon\)-competitive for any constant \(\varepsilon > 0\).
\end{theorem}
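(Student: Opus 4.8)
The plan is to prove the two directions separately: an upper bound establishing that \textsc{Lazy} is strictly $k$-competitive, and a matching lower bound showing that no choice of additive slack can push the ratio below $k$. As noted before the statement, the naive set-cover estimate only gives $k(k+1)/2$, so the real content of the upper bound is to shave this down to $k$ by charging \textsc{Lazy}'s markings against the relative substring complexity $\delta$, while the lower bound is driven by de Bruijn sequences.

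For the upper bound, write $d_\ell$ for the number of distinct substrings of length $\ell$, so that $\delta \ge d_\ell/\ell$ for every $\ell$, and recall the bound $\delta \le \gamma^*$ of Kociumaka, Navarro and Prezza, which follows because each attractor position is crossed by at most $\ell$ distinct length-$\ell$ substrings. The key observation is that whenever \textsc{Lazy} marks a position $i$, it does so because the shortest substring ending at $i$ that starts after the previous marking is novel (a first occurrence) and has length at most $k$; call this the \emph{trigger} of $i$. Triggers are pairwise distinct, and for $i \ge k$ the length-$k$ window $T[i-k+1,i]$ contains the trigger as a suffix, so two equal windows would force equal, hence non-distinct, triggers. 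The markings at positions $i \ge k$ therefore inject into the distinct length-$k$ substrings, bounding them by $d_k \le k\delta \le k\gamma^*$. I would then phrase this as an induction on the scope from $1$ up to $k$, whose inductive claim bounds the number of markings with trigger length at most $\ell$ by $\ell\delta$; the step adds the markings of trigger length exactly $\ell$ and must show this contributes at most $\delta$. This is precisely where the short-prefix markings get absorbed: a prefix carrying many short triggers forces $\delta$ itself to be large (already $\delta \ge d_1 = \sigma$), which is what keeps the bound strict rather than $k$-competitive only up to an additive constant.

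For the lower bound, I would let the adversary build hard instances from de Bruijn sequences of order $k$, in which every length-$k$ string occurs exactly once. Presented such a sequence, \textsc{Lazy} is forced to place almost one marking per distinct length-$k$ substring, so its cost is proportional to $\sigma^k$. An offline optimum, by contrast, exploits that a single position is crossed by $k$ distinct length-$k$ windows, so it can cover the same set with roughly a $k$-th as many markings. To prevent \textsc{Lazy} from amortizing coverage across the sequence, I would single out a small fraction of the length-$k$ substrings to act as delimiters that reset the online algorithm's context, forcing a fresh marking for each remaining substring while perturbing the counts only by a vanishing fraction. Pairing an online cost of $\approx \sigma^k$ with an offline $k$-attractor of size $\approx \sigma^k/k$ yields a ratio approaching $k$; since the construction uses only substrings of length exactly $k$, the same bound transfers to the sharp variant, and letting the alphabet or order grow drives the ratio to $k$, ruling out $(k-\varepsilon)$-competitiveness for every constant $\varepsilon > 0$.

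The main obstacle is twofold. On the upper-bound side the delicate point is strictness: the clean window argument only yields $k\gamma^* + O(k)$, and eliminating the additive term requires the induction to charge the short-prefix markings against $\delta$ rather than discarding them, leaning on the fact that an abundance of distinct short substrings inflates $\delta$ itself. On the lower-bound side the hard part is controlling the offline optimum: I must exhibit an explicit $k$-attractor of the de Bruijn sequence of size close to $\sigma^k/k$, so that the ratio genuinely tends to $k$ rather than to some smaller constant, and simultaneously verify that inserting the delimiters both forces \textsc{Lazy}'s per-substring cost and leaves the offline cost essentially unchanged.
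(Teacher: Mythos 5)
Your lower-bound plan matches the paper's: a spoon-feeding prefix \(T_1\) in which delimiters (runs of a repeated letter, i.e.\ the roughly \(2/\sigma\) fraction of length-\(k\) strings excluded from \(\mathcal{W}(k)\)) isolate the first occurrences of the remaining \(\sigma^{k-2}(\sigma-1)^2\) substrings so that \textsc{Lazy} pays for each of them, followed by a concatenation of unfolded de Bruijn sequences \(T_2\) on which an offline attractor of size \(\sum_{i=1}^{k}(\sigma^i+i-1)/i \approx \sigma^k/k\) covers everything; letting \(\sigma \to \infty\) drives the ratio to \(k\), and since both costs diverge no additive constant helps. Your leftward-window injection for the upper bound is also sound as far as it goes: for markings at positions \(i \geq k\), equal windows \(T[i-k+1,i]\) would give the trigger of the later marking an earlier occurrence, so these markings inject into the set of distinct length-\(k\) substrings, whose number \(d_k\) is at most \(k\gamma^*\) by \Cref{lemma:RelSubCom}.

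The gap is in how you absorb the markings whose length-\(k\) window does not exist (positions \(i<k\), i.e.\ the early markings with short triggers), which is exactly what separates \emph{strict} \(k\)-competitiveness from \(k\)-competitiveness with an additive \(O(k)\) term. Your proposed inductive step---that the markings with trigger length exactly \(\ell\) contribute at most \(\delta\)---is unjustified and false in general: those triggers are merely distinct length-\(\ell\) substrings, so the only available bound is \(d_\ell \leq \ell\delta\), and already for \(k=2\) one can force close to \(d_2\) markings of trigger length \(2\) while \(\delta\) is about \(d_2/2\). The paper closes this hole differently: it locates the first run of \(k-1\) consecutive unmarked positions, extends every trigger \emph{rightward} to length \(k\) if its marking lies before that run and \emph{leftward} if it lies after, and shows all these extensions are pairwise distinct length-\(k\) substrings (any coincidence would make the later trigger non-novel, and the gap guarantees the extensions exist and that a rightward and a leftward extension cannot collide), giving \(|\Gamma| \leq d_k \leq k\gamma^*\) with no additive term; the degenerate case where no such run exists is handled by a separate induction on \(k\), since then \(\Gamma_k = \Gamma_{k-1}\) by \Cref{lemma:kgap} and \(|\Gamma_{k-1}^*| \leq |\Gamma_k^*|\). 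You need this two-sided extension (or some equivalent device); the per-trigger-length accounting does not deliver strictness.
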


\subsection{Upper Bound}

Kempa et al.\ \cite{verAndOpt} show that minimal \(k\)-attractors are at most a factor \(k\) away from the size of an optimal solution. 
While the online \(k\)-attractor computed by \textsc{Lazy} is easily seen to be not necessarily minimal, e.g., on the string \(\underline{a}\underline{b}a\underline{a}\), we still show the same performance guarantee.

We introduce two helpful observations before proving the upper bound.
\Cref{lemma:RelSubCom} has already appeared in different contexts \cite{relativeSubstringComplexity, optTimeDictCompressedIndices} to show the relation between a string's attractor size and complexity function. We rephrase it for our purposes.

\begin{lemma}[Relation to Substring Complexity]\label{lemma:RelSubCom}
    For any length \(l\) and string \(T\), if \(T\) contains \(x\) many different substrings of length \(l\), any sharp \(l\)-attractor and thus any \(l'\)-attractor for \(T\) with \(l' \geq l\) has size at least \(\lceil x/l \rceil\).
\end{lemma}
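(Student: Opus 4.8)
The plan is to prove the bound by a covering/counting argument, first for sharp $l$-attractors and then lifting it to $l'$-attractors. The key combinatorial fact I would establish is that a single marked position can be responsible for covering at most $l$ distinct substrings of length $l$. Concretely, if $p \in \Gamma$, then any length-$l$ occurrence $T[i \dots i+l-1]$ that contains $p$ must satisfy $i \le p \le i+l-1$, i.e.\ $p-l+1 \le i \le p$. There are at most $l$ such starting positions $i$ (fewer near the boundaries of $T$), so at most $l$ distinct length-$l$ strings have an occurrence crossing $p$ through one of these windows.

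Next I would set up an assignment from distinct substrings to attractor positions. Since $\Gamma$ is a sharp $l$-attractor, each of the $x$ distinct length-$l$ substrings has at least one occurrence crossing some position of $\Gamma$; assign each substring to one such witnessing position. By the previous observation this assignment is at most $l$-to-one, so $x \le l \cdot |\Gamma|$, which gives $|\Gamma| \ge x/l$ and hence $|\Gamma| \ge \lceil x/l \rceil$ because $|\Gamma|$ is an integer.

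Finally I would extend the statement from sharp $l$-attractors to $l'$-attractors with $l' \ge l$. An $l'$-attractor covers every substring of length at most $l'$, and in particular all substrings of length exactly $l$; hence it is in particular a sharp $l$-attractor, so the same lower bound $\lceil x/l \rceil$ applies to it.

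The one subtle point I would be most careful about is the at-most-$l$-to-one claim. The count must be over distinct substrings (string contents) rather than over occurrences: the argument works because each of the at most $l$ windows through a fixed position $p$ is a single window, so distinct substrings assigned to $p$ necessarily correspond to distinct windows, of which there are at most $l$. Once this is stated cleanly, the rest is routine arithmetic and the inclusion of length-$l$ substrings in the $l'$-attractor condition.
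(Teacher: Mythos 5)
Your proposal is correct and follows essentially the same argument as the paper: each attractor position lies in at most $l$ length-$l$ windows, hence covers at most $l$ distinct length-$l$ substrings, which forces $|\Gamma| \geq \lceil x/l \rceil$; the extension to $l'$-attractors is the same observation that they must in particular cover all length-$l$ substrings. Your explicit remark that the counting is over distinct substrings rather than occurrences is a welcome clarification of a point the paper leaves implicit.
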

\begin{proof}
    Any single attractor position \(i\) can cover at most \(l\) different substrings of length \(l\),
    namely the substrings \(T[i-l+1, i], T[i-l+2, i+1], \dots, T[i, i+l-1]\). 
    Thus, any attractor that asks to cover \(x\) different substrings of length \(l\) needs at least \(\lceil x/l \rceil\) many attractor positions.
\end{proof}

\begin{lemma}\label{lemma:kgap}
    Let \(T\) be the input to the online \(k\)-attractor problem, \(\Gamma_k\) the output by \textsc{Lazy} for online \(k\)-attractor and \(\Gamma_{k-1}\) the output by \textsc{Lazy} for online \((k-1)\)-attractor.
    If \(\Gamma_k\) has no occurrence of at least \(k-1\) unmarked positions between two markings or the last marking and the end of string, \(\Gamma_k = \Gamma_{k-1}\).
\end{lemma}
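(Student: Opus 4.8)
The plan is to prove \(\Gamma_k = \Gamma_{k-1}\) by induction on the revealed position \(i\), showing that both runs of \textsc{Lazy} always share the same last marked position and take the same decision. The first step is to make the marking rule precise. Writing \(j\) for the last marked position and letting \(T[p_0,i]\) be the longest novel (first-occurrence) substring ending at \(i\) (so that \(T[p,i]\) is novel exactly for \(p\le p_0\)), \textsc{Lazy} for the \(k\)-attractor marks \(i\) precisely when this substring is still uncovered, i.e.\ \(j<p_0\), and is short enough to matter, i.e.\ has length at most \(k\). As in \Cref{theorem:lzonline}, if the window \(T[j+1,i]\) has an earlier occurrence then all of its substrings are already covered, so no marking is forced. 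The only difference between the \(k\)- and the \((k-1)\)-variant is this length cap, applied at \(k\) versus \(k-1\); crucially, the quantities \(j\) and \(p_0\) that drive the decision are the same for both.

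Second, I would reformulate the hypothesis. Since no gap of \(\Gamma_k\) contains \(k-1\) or more unmarked positions, between two consecutive \(\Gamma_k\)-markings \(j\) and \(i\) there are at most \(k-2\) unmarked positions, so the triggering window \(T[j+1,i]\) has length \(i-j\le k-1\). Thus every marking of the \(k\)-variant is already caused by a novel substring of length at most \(k-1\), which is exactly the kind of witness the \((k-1)\)-variant can use.

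For the inductive step, assume both outputs agree on \(T[1,i-1]\) and hence share the last marked position \(j\); since \(p_0\) depends only on \(T\), the coverage test \(j<p_0\) is identical for both. If \(j\ge p_0\), neither variant marks \(i\). If \(j<p_0\), the \(k\)-variant marks \(i\) unless \(T[p_0,i]\) has length exceeding \(k\); but then \(T[j+1,i]\) has length at least \(k\), so \(j+1,\dots,i\) would be a run of at least \(k\) unmarked positions of \(\Gamma_k\), contradicting the hypothesis. Hence under the hypothesis the \(k\)-variant marks \(i\) whenever \(j<p_0\), and then \(p_0\ge j+1\ge i-k+2\), so \(T[p_0,i]\) has length at most \(k-1\); the \((k-1)\)-variant, whose only additional restriction is the tighter cap, is therefore equally forced to mark \(i\). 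In every case the two variants agree at \(i\), and the induction yields \(\Gamma_k=\Gamma_{k-1}\).

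The main obstacle is exactly this case analysis: the two algorithms differ only through the length cap, and the crux is to see that the hypothesis rules out the single configuration in which the caps at \(k\) and \(k-1\) could diverge, namely a window that stays non-novel until it reaches length \(k\). The supporting ingredient — that a window with an earlier occurrence already has all its substrings covered, so that the clean ``novel window \(\Leftrightarrow\) forced marking'' rule holds — is routine but must be threaded through the induction.
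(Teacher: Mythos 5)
Your proof is correct and follows essentially the same route as the paper's: the gap hypothesis forces every novel substring triggering a marking of \(\Gamma_k\) to have length at most \(k-1\), so it equally triggers the \((k-1)\)-variant, and neither variant can mark earlier than the other. You merely make explicit (as an induction on positions with the \(j<p_0\) characterization) what the paper states in three sentences.
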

\begin{proof}
    Each time \textsc{Lazy} puts a marking to produce \(\Gamma_k\), it is due to a novel substring which is as long as the distance to the last marking.
    As there are at most \(k-2\) unmarked positions between any two markings, every novel substring has length at most \(k-1\) and therefore also induces a marking when producing \(\Gamma_{k-1}\).
    Clearly, \(\Gamma_{k-1}\) will also not have an earlier marking than \(\Gamma_k\).
\end{proof}

\begin{lemma}\label{lemma:onlineupper}
    The \textsc{Lazy} algorithm is strictly \(k\)-competitive for the \(k\)-attractor problem.
\end{lemma}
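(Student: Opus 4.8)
The plan is to bound \textsc{Lazy}'s cost by \(k\) times the relative substring complexity \(\delta=\max_{1\le l\le k} d_l/l\), where \(d_l\) denotes the number of distinct length-\(l\) substrings of \(T\). Since \Cref{lemma:RelSubCom} gives \(\gamma^*\ge\lceil d_l/l\rceil\ge d_l/l\) for every \(l\le k\) (here \(\gamma^*\) is the optimal \(k\)-attractor size), we have \(\delta\le\gamma^*\), so an inequality \(|\Gamma_k|\le k\delta\) would imply the claimed strict \(k\)-competitiveness.

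The structural engine is the observation that whenever \textsc{Lazy} marks a position \(i\ge k\), the length-\(k\) window \(T[i-k+1,i]\) is a first occurrence. The novel substring that triggers the marking is a suffix of this window (it equals the window when the preceding gap is at least \(k\), and is a proper suffix otherwise); were the window to have an earlier occurrence, that suffix would too, contradicting its novelty. Hence distinct markings at positions \(\ge k\) are sent injectively to distinct length-\(k\) substrings, so their number is at most \(d_k\le k\delta\). Equivalently, charging each such marking to an optimal position crossed by its window and using that at most \(k\) distinct length-\(k\) substrings cross any fixed position recovers \(d_k\le k\gamma^*\).

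To obtain a \emph{strict} bound I would run an induction on \(k\). The base case \(k=1\) is exact: \textsc{Lazy} marks precisely the first occurrence of each symbol, so \(|\Gamma_1|=\sigma=\gamma^*\). For the step I would compare the runs of \textsc{Lazy} for scopes \(k\) and \(k-1\). By \Cref{lemma:kgap}, wherever no gap reaches \(k-1\) unmarked positions the two runs agree, so every marking on which \(\Gamma_k\) departs from \(\Gamma_{k-1}\) follows a long gap and is therefore triggered by a novel length-\(k\) window. Bounding the number of these extra markings by \(\gamma^*\) would give \(|\Gamma_k|\le|\Gamma_{k-1}|+\gamma^*\), and telescoping with \(\gamma^*_{j}\le\gamma^*\) for \(j\le k\) (the optimal \(j\)-attractor size, at most \(\gamma^*\) since every \(k\)-attractor is a \(j\)-attractor) together with the base case yields \(|\Gamma_k|\le\sum_{j=1}^{k}\gamma^*_{j}\le k\gamma^*\).

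The step I expect to be the main obstacle is precisely this last charging, i.e.\ eliminating the additive contribution of the first \(k-1\) positions to reach a strict rather than merely asymptotic factor \(k\). The difficulty is that a short window inside the initial prefix, extended forward to a novel length-\(k\) window, can coincide with the backward window of a later marking, so the injection double counts there. I would contain this by showing such coincidences are confined to the prefix region \([1,2k-2]\), and then argue that a nonempty block of early markings forces enough low-order substring complexity — via \Cref{lemma:RelSubCom} at the relevant small lengths, and ultimately via \(\gamma^*\ge\sigma\) — to absorb them without exceeding \(k\gamma^*\).
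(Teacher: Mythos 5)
Your counting device for markings at positions \(i\ge k\) is sound, and arguably cleaner than the paper's: the backward window \(T[i-k+1,i]\) must be a first occurrence (otherwise the novel suffix that triggered the marking would occur earlier), so these markings inject into distinct length-\(k\) substrings, of which there are at most \(k\gamma^*\) by \Cref{lemma:RelSubCom}. But this only gives \(|\Gamma_k|\le k\gamma^*+(k-1)\), i.e.\ non-strict \(k\)-competitiveness, and your route to eliminating the additive term has a genuine gap: the inductive step \(|\Gamma_k|\le|\Gamma_{k-1}|+\gamma^*\) is false. On the paper's own lower-bound instance \(T=\SF(k)\prod_{i=1}^k\dB(i,\Sigma)\) we have \(|\Gamma_k|\ge\sigma^k-\sigma^{k-1}\) by \Cref{lemma:SFproperties}, while \(|\Gamma_{k-1}|\le\sum_{l=1}^{k-1}\sigma^l<2\sigma^{k-1}\) (each marking of scope-\((k-1)\) \textsc{Lazy} is triggered by a distinct novel substring of length at most \(k-1\)), so \(|\Gamma_k|-|\Gamma_{k-1}|=\Theta(\sigma^k)\), whereas \(\gamma^*=\Theta(\sigma^k/k)\) there by \Cref{lemma:chainDeBruijn}. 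The number of markings that follow a long gap is controlled by the number of distinct length-\(k\) substrings, i.e.\ by \(k\gamma^*\) rather than \(\gamma^*\), so the telescoping could only ever yield \(O(k^2\gamma^*)\).

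The paper resolves this differently: the induction over \(k\) is invoked \emph{only} in the degenerate case where \(\Gamma_k\) never leaves \(k-1\) consecutive positions unmarked, in which \Cref{lemma:kgap} gives the exact identity \(\Gamma_k=\Gamma_{k-1}\) and the hypothesis yields ratio \(k-1\) outright; no additive charge per level is needed. In the complementary case, a gap of \(k-1\) unmarked positions exists, and the paper injects \emph{every} marking --- including the at most \(k-1\) early ones --- into a distinct length-\(k\) substring by extending each triggering novel substring rightwards if it lies before the first such gap and leftwards if it lies after, the gap guaranteeing that a rightward and a leftward extension cannot collide. If you want to keep your backward-window injection, you need an analogous one-sided treatment of the early markings; your final paragraph gestures at this but does not carry it out, and as it stands the strictness claim is unproven.
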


\begin{proof}
    We show the statement by induction on \(k\).
    The case \(k = 1\) corresponds to a trivial coverage of symbols from \(\Sigma\), thus \textsc{Lazy} is optimal.
    Assume \((k-1)\)-attractor is \((k-1)\)-competitive.
    Let \(T\) be the input to the online \(k\)-attractor problem, \(\Gamma_k\) the output by \textsc{Lazy} and \(\Gamma_k^*\) an optimal solution.
    We first assert that \textsc{Lazy} produces an output \(\Gamma_k\) where at some point there are at least \(k-1\) unmarked positions either between two markings or the last marking and the end of string.

    If this does not hold, we apply \Cref{lemma:kgap} and obtain \(|\Gamma_{k}| = |\Gamma_{k-1}|\).
    It always holds that \(|\Gamma_{k-1}^*| \leq |\Gamma_{k}^*|\).
    Thus, we compute the competitive ratio on \(T\) by
    \[\frac{|\Gamma_{k}|}{|\Gamma_{k}^*|} \leq 
    \frac{|\Gamma_{k-1}|}{|\Gamma_{k-1}^*|} \leq k-1\]
    where the second inequality holds due to our induction hypothesis.

    With this assumption, we are now ready to count \(k\)-substrings and uniquely assign them to markings.
    Every marking chosen by \textsc{Lazy} is due to a novel substring \(w_i = T[x, x + k_i - 1]\) of length \(k_i \leq k\) appearing for the first time. 
    Note that multiple novel substrings can appear at once, so we fix \(w_i\) to be the longest of those.
    This substring can always be extended either to the right or to the left to length \(k\).
    We present a technique of extending them such that each marking is uniquely assigned a \(k\)-substring appearing in \(T\).
    The number of different \(k\)-substrings in \(T\) is then at least \(|\Gamma|\),
    and an optimal offline \(k\)-attractor has cost at least \(|\Gamma|/k\), according to \Cref{lemma:RelSubCom}.

    We define \(LE(w_i, k) = LE(T[x,x+k_i-1],k) = T[x-(k-k_i),x+k_i-1]\) as the leftward extension and \(RE(w_i, k) = RE(T[x,x+k_i-1],k) = T[x,x+k-1]\) as the rightward extension of \(w_i = T[x,x+k_i-1]\).
    By our assumption, there exists at least one set of \(k-1\) consecutive unmarked positions.
    We construct a set \(S\) of substrings of length \(k\).
    For all \(w_i\) to the left of the first such set, add the rightward extension to \(S\),
    and for all \(w_i\) to the right of the first such set, add the leftward extension to \(S\).
    We claim that for every \(w_i\), a unique substring of length \(k\) is added, which implies \(|S| = |\Gamma|\).
    First, observe that all these extensions actually exist, that is, they do not cross the start or end of the string.
    Further, each extended string is different from all other extended strings.
    Assume the extensions of two novel substrings \(w_i, w_j\) with \(i < j\) are the same.
    If the extensions are both rightward, \(w_j\) is a prefix of \(RE(w_i, k)\), thus not a novel substring at its position and already covered, contradicting its definition.
    The mirrored argument applies if both extensions are leftward.
    If \(w_i\) is extended rightwards and \(w_j\) is extended leftwards, we make use of the fact that we switch from rightward to leftward extensions at a gap of at least \(k-1\) unmarked positions.
    That way, \(RE(w_i, k)\) starts at a position \(x\) that is strictly smaller than the position \(y\) at which \(LE(w_j, k)\) starts, thus also \(w_j\) appears in \(RE(w_i, k)\) and is not novel.
\end{proof}

As an example, consider the string \(T = \underline{a}\underline{b}a\underline{a}baa\underline{a}b\underline{b}\underline{c}aabb\underline{b}\).
The underlined positions \(\{1, 2, 4, 8, 10, 11, 16\}\) are marked by the \textsc{Lazy} algorithm, thus the online cost is seven.
\Cref{table:kcomp} shows the novel substrings that forced each of the markings, as well as their extensions.
The first time \textsc{Lazy} does not mark \(k-1=2\) positions in a row is for positions 5 and 6, thus for positions 1, 2, and 4, the rightward extension is considered, and for positions 8, 10, 11, and 16 the leftward extension is considered.
This yields a set of seven different \(3\)-substrings \(\{aba, baa, aab, aaa, abb, bbc, bbb\}\).
The string actually contains two other 3-substrings \(aab\) and \(bca\) which are not part of this set.
With \Cref{lemma:RelSubCom}, we get that an optimal attractor has size at least \(7/3 > 2\) thus we obtain a bound of 3.
For this small example, this bound is already given by the size of the alphabet.
Indeed, an optimal attractor for \(T\) is \(\{6, 11, 14\}\).

\begin{table}[!t]
\renewcommand{\arraystretch}{1.3}
\begin{tabular}{r|l l l|l l l l}
position & 1 & 2 & 4 & 8 & 10 & 11 & 16\\
\hline
novel substring & $a$ & $b$ & $aa$ & $aaa$ & $bb$ & $c$ & $bbb$\\
extension & $aba$ & $baa$ & $aab$ & $aaa$ & $abb$ & $bbc$ & $bbb$\\
\end{tabular}
\caption{Novel Substrings and their Extensions}
\label{table:kcomp}
\end{table}

\subsection{Lower Bound}\label{subs:lowerbound}

This subsection is devoted to creating strings that induce a high cost for the online greedy algorithm \textsc{Lazy}.
The key idea is to single out a small set of substrings, such that a vastly higher number of other substrings can be separated using the small set as delimiters.
On the other hand, we analyse how to make use of the fact that attractors are not monotone to create a much better solution for the offline algorithm, that is, what strings allow the most efficient coverage of a set of substrings?

The online algorithm is presented two strings \(T_1, T_2\) successively as input \(T\).
Both strings include the same set of substrings of length up to \(k\), and their overlap, that is, the end of \(T_1\) combined with the start of \(T_2\), does not generate any additional substrings.
For the examples presented here, this always means all possible substrings of length up to \(k\), but this is not necessary, otherwise the proof just needs to be executed more carefully.
As both include the same set of substrings, the online algorithm has costs only on \(T_1\).
Conversely, there is an optimal solution (or a bound to it) that only uses \(T_2\).
We can then compute a lower bound to \textsc{Lazy}'s competitive ratio \(c\) by 
\[c = \frac{\cost(\textsc{Lazy}(T))}{\cost(\OPT(T))} \geq \frac{\cost(\textsc{Lazy}(T_1))}{\cost(\OPT(T_2))}\]

\begin{definition}[Spoon-feeding Substrings]\label{def:sfsubstringsdef}
    For \(l \geq 3\) and a fixed alphabet \(\Sigma\), we define the set of all spoon-feeding substrings \(\mathcal{W}(l)\) to contain exactly all strings \(w \in \Sigma^l\) with \(w[1] \neq w[2]\) and \(w[l-1] \neq w[l]\). Thus, 
    \[\mathcal{W}(l) = \Sigma^l - \{x \in \Sigma^l \mid 
    x = \alpha^2 x' \text{ or } x = x'\alpha^2 \text{ for some } \alpha \in \Sigma, x' \in \Sigma^{l-2}\}\]
\end{definition}

\begin{lemma}[Number of Spoon-feeding Substrings]\label{lemma:sfnumber}
    The number of spoon-feeding substrings of length \(l\) is \[|\mathcal{W}(l)| = \sigma^{l} - 2\sigma^{l-1} + \sigma^{l-2} = \sigma^{l-2}(\sigma - 1)^2.\]
\end{lemma}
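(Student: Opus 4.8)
The plan is to count directly using inclusion–exclusion on the two ``forbidden'' events, namely that the first two letters coincide or that the last two letters coincide. Fix the alphabet size $\sigma = |\Sigma|$ and the length $l \geq 3$. The total number of strings in $\Sigma^l$ is $\sigma^l$. Let $A$ be the set of strings $w$ with $w[1] = w[2]$ (i.e.\ of the form $\alpha^2 x'$), and let $B$ be the set of strings with $w[l-1] = w[l]$ (i.e.\ of the form $x'\alpha^2$). By \Cref{def:sfsubstringsdef}, $\mathcal{W}(l) = \Sigma^l \setminus (A \cup B)$, so $|\mathcal{W}(l)| = \sigma^l - |A \cup B|$.

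The key step is to count $|A|$, $|B|$, and $|A \cap B|$. A string in $A$ is determined by choosing the repeated first letter ($\sigma$ choices, fixing both $w[1]$ and $w[2]$) and then the remaining $l-2$ positions freely, giving $|A| = \sigma \cdot \sigma^{l-2} = \sigma^{l-1}$; by symmetry $|B| = \sigma^{l-1}$. For $A \cap B$, a string must satisfy $w[1]=w[2]$ and $w[l-1]=w[l]$; since $l \geq 3$ these two constrained pairs occupy four position slots but only $l$ positions are involved, and the constraints are on disjoint pairs of adjacent positions provided $l \geq 4$. Choosing the first repeated letter ($\sigma$ ways), the last repeated letter ($\sigma$ ways), and the $l-4$ free interior positions freely yields $|A \cap B| = \sigma^2 \cdot \sigma^{l-4} = \sigma^{l-2}$.

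By inclusion–exclusion, $|A \cup B| = |A| + |B| - |A \cap B| = 2\sigma^{l-1} - \sigma^{l-2}$, hence
\[
|\mathcal{W}(l)| = \sigma^l - 2\sigma^{l-1} + \sigma^{l-2} = \sigma^{l-2}(\sigma^2 - 2\sigma + 1) = \sigma^{l-2}(\sigma - 1)^2,
\]
which is the claimed expression after factoring the perfect square.

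The main obstacle, and the only place requiring care, is the boundary case $l = 3$: here the two forbidden pairs $\{w[1], w[2]\}$ and $\{w[l-1], w[l]\} = \{w[2], w[3]\}$ overlap in the middle position $w[2]$, so they are no longer disjoint. I would handle this by verifying directly that the counting of $A \cap B$ still gives $\sigma^{l-2} = \sigma$: a string in $A \cap B$ with $l=3$ satisfies $w[1]=w[2]=w[3]$ (the overlap forces all three equal), yielding exactly $\sigma$ such strings, consistent with the formula. Thus the closed form holds uniformly for all $l \geq 3$, and no separate case analysis is needed in the final statement.
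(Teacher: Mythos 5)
Your proof is correct and follows essentially the same inclusion--exclusion argument as the paper, which subtracts the strings of the form $\alpha^2 x'$ and $x'\alpha^2$ and corrects for the doubly counted strings $\alpha^2 x''\beta^2$. Your explicit check of the boundary case $l=3$, where the two constrained pairs overlap in $w[2]$, is a welcome extra precaution that the paper's proof glosses over.
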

\begin{proof}
    This holds because all the strings \(\Sigma^l - \mathcal{W}(l)\) we need to subtract are of the form \(\alpha^2 x'\) or \(x' \alpha^2\) for \(\alpha \in \Sigma, x' \in \Sigma^{l-2}\).
    Accounting for duplicates in counting of the form \(\alpha^2 x'' \beta^2\), this yields \(\sigma^{l} - 2\sigma^{l-1} + \sigma^{l-2} = \sigma^{l-2}(\sigma - 1)^2\).
    The last form can also be derived directly, all positions are chosen freely except the first and the last, which cannot be the same as the second or second last respectively and thus have only \(\sigma - 1\) options.
\end{proof}

The core idea is that this set of spoon-feeding substrings can be written, using repeated symbols as delimiters, in a way such that the first occurrence of each such substring is at least \(k\) positions apart from every other first occurrence.
This will incur costs for the \textsc{Lazy} algorithm equal to the size of the set, which we show to be asymptotically equal to the size of the set of all substrings.
On the other hand, we show that de Bruijn sequences can be used to list all substrings of a given length in a very dense fashion, saving a factor of \(k\) (asymptotically) for the optimal solution.

The number of spoon-feeding substrings \(|\mathcal{W}(l)|\) as a fraction of all substrings of length \(l\) converges to 1 for increasingly large alphabets. 
Thus, forcing costs on \textsc{Lazy} for each spoon-feeding substring achieves the asymptotically largest possible cost for that amount of substrings.
To make sure that \textsc{Lazy} cannot cover two of these substrings with the same marking, any two occurrences of two substrings from \(\mathcal{W}(l)\) need to be at least \(k\) positions apart.
The space between those occurrences then needs to be filled with substrings that start or end in a repeated symbol.
We construct a string which lists the strings from \(\mathcal{W}(l)\), subdivided by repeating the first and last element \(k\) times to form a delimiter such that the scope \(k\) of the attractor cannot capture two elements from \(\mathcal{W}(l)\) at once.
Here, \(\prod\) denotes the concatenation of an ordered set of strings into a single string.

\begin{definition}[Spoon-feeding String for Length \(l \geq 3\)]
    We define the spoon-feeding string for length \(l \geq 3\) as
    \[\sff(l,k) = \prod_{w_1 \dots w_l \ \in \mathcal{W}(l)} w_1^k w_2 \dots w_{l-1} w_l^k.\]
\end{definition}

\begin{definition}[Spoon-feeding String for Lengths 2 to \(k\)]
    The spoon-feeding string for lengths up to \(k\) on scope \(k\) is the concatenation of \(\sff(l,k)\) for \(2 \leq l \leq k\) in order increasing in \(l\).
    \[\SF(k) = \prod_{l=2}^{k} \sff(l,k)
    = \left( \prod_{w_1 w_2 \ \in \mathcal{W}(2)} w_1^k w_2^k \right) \prod_{l=3}^{k} \left( \prod_{w_1 \dots w_l \ \in \mathcal{W}(l)} w_1^k w_2 \dots w_{l-1} w_l^k \right).\]
\end{definition}

\begin{lemma}\label{lemma:sffproperties}
    The following properties hold for all strings \(\sff(l,k)\) for \(k \geq l \geq 3\):
    \begin{enumerate}
        \item The length of \(\sff(l,k)\) is \(|\sff(l,k)| = (2k + (l-2))|\mathcal{W}(l)|\).
        \item The cost of \textsc{Lazy} on \(\sff(l,k)\) is at least \(|\mathcal{W}(l)| = \sigma^{l-2}(\sigma-1)^2 = \sigma^l - 2\sigma^{l-1} + \sigma^{l-2}\).
        \item \(\sff(l,k)\) contains all substrings of length up to \(l\).
    \end{enumerate}
\end{lemma}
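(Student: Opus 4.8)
The plan is to verify the three properties essentially by direct counting, treating each one in turn. For the first property, I would simply count the positions contributed by each block. The spoon-feeding string $\sff(l,k)$ concatenates one block $w_1^k w_2 \dots w_{l-1} w_l^k$ for each $w \in \mathcal{W}(l)$. Each such block has length $k + (l-2) + k = 2k + (l-2)$, since the first and last symbols are each repeated $k$ times while the $l-2$ middle symbols appear once. Summing over all $|\mathcal{W}(l)|$ blocks immediately gives $|\sff(l,k)| = (2k + (l-2))|\mathcal{W}(l)|$, and the value of $|\mathcal{W}(l)|$ is already supplied by \Cref{lemma:sfnumber}.

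For the second property, the key observation is that each word $w \in \mathcal{W}(l)$ contributes its own novel substring that forces a fresh \textsc{Lazy} marking. The design of the delimiters $w_1^k$ and $w_l^k$ guarantees this: because consecutive blocks are separated by at least $k$ repeated symbols, no single marking can straddle two distinct elements of $\mathcal{W}(l)$, and the substring $w$ itself (appearing as $w_1 w_2 \dots w_l$ inside its block, reading one copy of the delimiter symbol on each side) is a length-$l$ substring that can only have been introduced by the block for $w$. Since $w$ has $w[1] \neq w[2]$ and $w[l-1] \neq w[l]$ by \Cref{def:sfsubstringsdef}, it is structurally distinguished from the delimiter padding and from the boundaries of other blocks, so its first occurrence lies in its own block. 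Hence \textsc{Lazy} must place at least one marking per element of $\mathcal{W}(l)$, giving cost at least $|\mathcal{W}(l)|$, and the closed forms follow from \Cref{lemma:sfnumber}.

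For the third property, I would argue that every substring of length up to $l$ already appears somewhere inside $\sff(l,k)$. Any target substring $u$ with $|u| \leq l$ can be realized by choosing an appropriate spoon-feeding word and reading off a window of the corresponding block: repeated symbols $\alpha^i$ are available from the delimiters $w_1^k$ and $w_l^k$ (since $k \geq l$ provides enough padding), and any substring containing a change of symbol can be embedded in the core $w_1 w_2 \dots w_l$ of a suitably chosen $w \in \mathcal{W}(l)$. Since $\mathcal{W}(l)$ ranges over all words whose only constraint is on their first and last adjacent pairs, every internal pattern of length at most $l$ is witnessed.

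The main obstacle is the second property, specifically establishing that the first occurrence of each $w \in \mathcal{W}(l)$ truly falls in its own block and is not accidentally created earlier across a block boundary or within the delimiter padding of another word. The conditions $w[1] \neq w[2]$ and $w[l-1] \neq w[l]$ are exactly what rules this out, since they prevent $w$ from beginning or ending with the doubled symbols that characterize the delimiters; making this boundary analysis airtight — that no concatenation of a suffix of one block with a prefix of the next spuriously produces some $w \in \mathcal{W}(l)$ earlier than intended — is the delicate step. I would handle it by checking that any window crossing a boundary must contain a doubled symbol $\alpha^2$ straddling the junction, which disqualifies it from matching a spoon-feeding word.
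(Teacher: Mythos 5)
Your proposal is correct and follows essentially the same route as the paper for all three properties: the same per-block length count, the same argument that each $w \in \mathcal{W}(l)$ first occurs only in its own block because $w[1]\neq w[2]$ and $w[l-1]\neq w[l]$ forbid it from starting or ending inside a $k$-fold repeated delimiter (so the disjoint first occurrences each force a marking), and the same padding idea for covering the non-spoon-feeding substrings. One small caution on your boundary analysis: a doubled symbol straddling a junction does not by itself disqualify a window from lying in $\mathcal{W}(l)$ (only doubled symbols at the first or last pair of positions do), so the clean way to finish, as the paper does, is to note that a window not starting at the final symbol of a leading run either begins with a doubled symbol or runs into the trailing delimiter and ends with one.
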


\begin{proof}
    \begin{enumerate}
        \item 
            A single string in \(\mathcal{W}(l)\) (of length \(l\)) is encoded using \(2k + (l-2)\) symbols, as the first and the last symbol are each repeated \(k\) times.
        \item 
            Assume a string from \(\mathcal{W}(l)\) appears before it is spoon-fed. It starts with two different symbols, so it starts in a previously used string, not in the \(k\)-repeated delimiter. 
            The new string starts at the first position and is actually the same string, or starts later and ends in the delimiter.
            Thus, each \(w \in \mathcal{W}(l)\) first appears when it is spoon-fed and thus induces a marking in that appearance.
            It further does not overlap with another first appearance, so the total cost is at least \(|\mathcal{W}(l)|\), which is calculated in \Cref{lemma:sfnumber}.
        \item 
            We show that all substrings of length \(l\) are included, which implies the statement.
            All substrings in \(\mathcal{W}(l)\) are included by definition.
            Consider any string \(x\) in \(\Sigma^l - \mathcal{W}(l)\).
            As \(x\) is not in \(\mathcal{W}(l)\), \(x[1] = x[2]\) or symmetrically \(x[l-1] = x[l]\).
            We assume the first case holds. 
            Let \(i\) be the first position with \(x[1] \neq x[i]\). 
            Then, \(x[i,l]\) is the prefix of at least one word in \(\mathcal{W}(l)\) and going \(i-1\) positions to the left of this word's occurrence gives an occurrence of \(x\).
            For the second case, the mirrored argument holds.
    \end{enumerate}
\end{proof}

A valuable insight here is that the spoon-feeding string \(\sff(l, k)\) does not contain any substring of \(\mathcal{W}(l')\) for \(k \geq l' > l \geq 3\) as there are only \(l-2\) characters between the delimiters of length \(k\). 
This way, spoon-feeding strings can just be chained together to achieve a higher cost for the online algorithm.
This chain can start at \(l=3\).
Spoon-feeding does work differently for \(l=2\) and not at all for \(l=1\) as the definition of \(\mathcal{W}(l)\) with nonempty \(w'\) needs at least three positions.
In fact, we extend the definitions of \(\mathcal{W}(l)\) and \(\sff(l,k)\) to \(l=2\) and all properties listed in \Cref{lemma:sffproperties} still hold, and the size of \(\mathcal{W}(2)\) actually increases to \(\sigma(\sigma -1)\), which in turn increases the cost of online on \(\sff(2,k)\).
In \(\sff(l,k)\) it does not hold that each 2-substring appears for the first time when it is spoon-fed, however all occurrences of 2-substrings \(w\) with \(w[1] \neq w[2]\) are \(k\) symbols apart, thus \textsc{Lazy} needs to put a separate marking for each.

\begin{lemma}\label{lemma:SFproperties}
    The following properties hold for all strings \(\SF(k)\) for \(k \geq 3\):
    \begin{enumerate}
        \item The length of \(\SF(k)\) is \[|\SF(k)| = (3k-2)\sigma^{k} - (3k-1)\sigma^{k-1} - k\sigma^2 + (2k+1)\sigma.\]
        \item The cost of \textsc{Lazy} on \(\SF(k)\) is at least \(\sigma^k - \sigma^{k-1}\).
        \item \(\SF(k)\) contains all substrings of length up to \(k\).
    \end{enumerate}
\end{lemma}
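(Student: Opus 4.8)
The three claims decouple, and the plan is to dispatch them in increasing order of difficulty, feeding \Cref{lemma:sffproperties} through block by block. The third claim is immediate: since \(\SF(k) = \prod_{l=2}^{k}\sff(l,k)\), the factor \(\sff(k,k)\) is literally the last block and hence a contiguous substring of \(\SF(k)\). By the third property of \Cref{lemma:sffproperties} instantiated at \(l=k\), this single block already contains every substring of length up to \(k\), so \(\SF(k)\) inherits the property and nothing further is required.

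For the length I would start from \(|\SF(k)| = \sum_{l=2}^{k} |\sff(l,k)|\) and substitute the per-block length \(|\sff(l,k)| = (2k+l-2)|\mathcal{W}(l)|\) from the first property of \Cref{lemma:sffproperties}, together with \(|\mathcal{W}(l)| = \sigma^{l-2}(\sigma-1)^2\) from \Cref{lemma:sfnumber} for \(l\geq 3\) and the special value \(|\mathcal{W}(2)| = \sigma(\sigma-1)\). Pulling out the \(l=2\) term, the remaining sum \(\sum_{l=3}^{k}(2k+l-2)\sigma^{l-2}(\sigma-1)^2\) splits after the substitution \(j=l-2\) into a geometric part \(2k\sum_{j}\sigma^{j}\) and an arithmetico-geometric part \(\sum_{j} j\sigma^{j}\). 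Evaluating these with the standard closed forms \(\sum_{j=1}^{n}\sigma^{j} = (\sigma^{n+1}-\sigma)/(\sigma-1)\) and \(\sum_{j=1}^{n} j\sigma^{j} = \sigma\bigl(1-(n+1)\sigma^{n}+n\sigma^{n+1}\bigr)/(\sigma-1)^2\) and adding back the \(l=2\) contribution yields the stated polynomial. This is pure bookkeeping; the leading terms \((3k-2)\sigma^{k}-(3k-1)\sigma^{k-1}\) are the ones that matter for the later competitive-ratio computation.

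For the cost I would combine the per-block bound with a cross-block novelty argument. The second property of \Cref{lemma:sffproperties} says \textsc{Lazy} pays at least \(|\mathcal{W}(l)|\) on \(\sff(l,k)\) in isolation, because each \(w\in\mathcal{W}(l)\) first occurs exactly where it is spoon-fed and these occurrences do not overlap. To lift this to the concatenation I must show that \(w\) still makes its \emph{global} first occurrence inside block \(l\), i.e.\ it cannot already appear in an earlier block \(\sff(l',k)\) with \(l'<l\), nor at a block junction. This is the run-length observation flagged in the text: since \(w[1]\neq w[2]\) and \(w[l-1]\neq w[l]\), any occurrence must begin at the last symbol of one delimiter run and end at the first symbol of the next, so a window of length \(l\) with distinct ends does not fit when only \(l'-2<l-2\) non-delimiter symbols separate the runs, while a window crossing a pure run--run junction can only have length two. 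Hence every global first occurrence lands in its intended block, the forced markings are pairwise distinct, and \textsc{Lazy}'s cost is at least \(\sum_{l=2}^{k}|\mathcal{W}(l)|\). I would finish by telescoping: writing \(\sigma^{l-2}(\sigma-1)^2 = (\sigma-1)(\sigma^{l-1}-\sigma^{l-2})\) collapses \(\sum_{l=3}^{k}\) to \((\sigma-1)(\sigma^{k-1}-\sigma)\), and adding \(|\mathcal{W}(2)| = \sigma(\sigma-1)\) gives \((\sigma-1)\sigma^{k-1} = \sigma^{k}-\sigma^{k-1}\) exactly.

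The main obstacle is precisely this cross-block (and cross-junction) novelty step in the cost bound: the statements of \Cref{lemma:sffproperties} only control \textsc{Lazy} on a single isolated block, so the real content is verifying that concatenation neither lets an earlier block pre-cover a later spoon-feeding substring nor merges two forced markings into one. The \(l=2\) block needs slightly separate care, since there the ``first occurrence coincides with spoon-feeding'' property fails and one instead relies on the fact that all occurrences of a \(2\)-substring with distinct symbols are \(k\) positions apart, which is already built into \Cref{lemma:sffproperties}. The length computation, by contrast, is routine once the two summation identities are in hand.
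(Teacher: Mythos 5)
Your proposal follows essentially the same route as the paper: part 3 via the \(\sff(k,k)\) block, part 1 by summing the per-block lengths with the standard geometric and arithmetico-geometric identities, and part 2 by arguing that each spoon-feeding substring still makes its global first occurrence where it is spoon-fed in the concatenation and then telescoping to \(\sigma^k-\sigma^{k-1}\) (your cross-block novelty argument is actually more explicit than the paper's one-line claim, and correctly isolates the \(l=2\) case). The only caveat is in the length bookkeeping: with \(|\sff(2,k)| = 2k\,|\mathcal{W}(2)| = 2k\sigma(\sigma-1)\) your computation would not reproduce the stated lower-order terms \(-k\sigma^2+(2k+1)\sigma\) (the paper's own derivation starts from \(k\sigma^2\) for the \(l=2\) block, which matches the definition only when \(\sigma=2\)), but this discrepancy lies in the paper itself and affects neither the leading terms nor the competitive-ratio computation downstream.
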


\begin{proof}
    \begin{enumerate}
        \item The length is computed by
            \begin{align*}
                & k \sigma^2 + \sum_{l=3}^{k} (2k + (l-2))|\mathcal{W}(l)| \\
                ={} & k\sigma^2 + \sum_{l=3}^{k} (2k + (l-2)) \sigma^{l-2}(\sigma-1)^2 \\
                ={} & k\sigma^2 + 2k(\sigma-1)^2\sum_{l=3}^{k} \sigma^{l-2} + (\sigma-1)^2\sum_{l=3}^{k} (l-2) \sigma^{l-2} \\
            \end{align*}
            \begin{align*}
                ={} & k\sigma^2 + 2k(\sigma-1)^2\sum_{l=1}^{k-2} \sigma^{l} + (\sigma-1)^2\sum_{l=1}^{k-2} l\sigma^{l} \\
                ={} & k\sigma^2 + 2k(\sigma-1)^2\left(\frac{\sigma^{k-1}-1}{\sigma-1}-1\right) + (\sigma-1)^2\sum_{l=1}^{k-2} l\sigma^{l} \\
                ={} & k\sigma^2 + 2k(\sigma-1)^2\left(\frac{\sigma^{k-1}-\sigma}{\sigma-1}\right) + (\sigma-1)^2\frac{(k-2)\sigma^{k} - (k-1)\sigma^{k-1} + \sigma}{(\sigma-1)^2} \\
                ={} & k\sigma^2 + 2k(\sigma-1)\left(\sigma^{k-1} - \sigma\right) + (k-2)\sigma^{k} - (k-1)\sigma^{k-1} + \sigma \\
                ={} & k\sigma^2 + 2k\sigma^{k} - 2k\sigma^2 - 2k\sigma^{k-1} + 2k\sigma + (k-2)\sigma^{k} - (k-1)\sigma^{k-1} + \sigma \\
                ={} & (3k-2)\sigma^{k} - (3k-1)\sigma^{k-1} - k\sigma^2 + (2k+1)\sigma \\
                \in{} & \mathcal{O}(k\sigma^k)
            \end{align*}
        \item 
            Each spoon-feeding substring still appears for the first time when it is spoon-fed.
            To evaluate the cost, we sum the bounds of all \(\sff(l,k)\):
            \begin{align*}
                 & \sigma (\sigma -1) + \sum_{l=3}^{k} \sigma^{l-2}(\sigma - 1)^2 \\
                 ={} & (\sigma - 1) + (\sigma - 1)^2 \sum_{l=0}^{k-2} \sigma^{l}  \\
                 ={} & (\sigma - 1) + (\sigma - 1)^2 \frac{\sigma^{(k-2) + 1} - 1}{\sigma - 1} \\ 
                 ={} & (\sigma - 1) + (\sigma - 1) (\sigma^{k-1} - 1) \\
                 ={} & \sigma^k - \sigma^{k-1}
            \end{align*}
        \item By \Cref{lemma:sffproperties} this already holds for \(\sff(k,k)\), which is a substring of \(\SF(k)\).
    \end{enumerate}
\end{proof}

\subsection*{
Optimal String Attractors for de Bruijn Sequences}

De Bruijn sequences are circular words which contain each substring of a given length over a fixed alphabet exactly once. 
They have been the subject of extensive study in many contexts, for example in Lempel and Ziv's initial paper \cite{lempelziv}.
As they are an efficient collection of a given set of substrings, they prove to be easy instances both online and offline.
We use them as second part of the input, where they help the offline algorithm by providing an efficient way to cover all substrings in the optimal solution.

\begin{definition}[De Bruijn sequences \cite{debruijn, combinatorialView}]\label{definition:dB}
    A de Bruijn sequence of order \(k\) on an alphabet \(\Sigma\) of size \(\sigma\) is a circular string containing every \(k\)-substring over \(\Sigma\) exactly once.
\end{definition}

De Bruijn sequences are originally constructed by Eulerian or Hamiltonian walks on de Bruijn graphs, but another efficient construction using Lyndon words is now commonly used.
This construction produces the lexicographically smallest de Bruijn sequence.

\begin{definition}[Lyndon words \cite{originscombinatoricswords}]
    A word or string \(T = T[1] \dots T[n]\) is Lyndon if it is the unique minimum element in its conjugacy class, that is, it is lexicographically smaller than all its circular rotations.
    The rotation of a string \(T\) by \(i\) steps results in a string \(T[i+1] \dots T[n] T[1] \dots T[i]\).
    The conjugacy class of a string is the multiset of all its rotations by \(i\) steps for \(1 \leq i \leq n\), which especially implies that all Lyndon words are aperiodic.
\end{definition}

\begin{lemma}[Generating de Bruijn sequences using Lyndon words \cite{martindb, dbLyndon, originscombinatoricswords}]\label{lemma:dBLyndon}
    Given a number \(k\), concatenating all Lyndon words of all lengths \(l\) dividing \(k\) in lexicographic order yields a de Bruijn sequence.
\end{lemma}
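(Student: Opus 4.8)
The plan is to verify the two conditions that together characterise a de Bruijn sequence of order $k$: that the circular word $C$ obtained from the concatenation has length exactly $\sigma^k$, and that every word of $\Sigma^k$ occurs as a cyclic factor of length $k$. These two facts already suffice, since a circular word of length $\sigma^k$ has exactly $\sigma^k$ factors of length $k$ (one beginning at each position) and $|\Sigma^k| = \sigma^k$; hence surjectivity of the map sending each position to its length-$k$ window is equivalent to its injectivity, and either is equivalent to each $k$-substring occurring exactly once. So it is enough to establish the length together with the occurrence of every $w \in \Sigma^k$.

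First I would compute the length. Writing $L_\sigma(d)$ for the number of Lyndon words of length $d$ over $\Sigma$, the concatenation contributes $d$ symbols for each of the $L_\sigma(d)$ Lyndon words of length $d$, summed over $d \mid k$, so $|C| = \sum_{d \mid k} d\, L_\sigma(d)$. This equals $\sigma^k$ by the classical necklace-counting identity: every word of length $k$ has a unique primitive root $p$ with $|p| \mid k$, the primitive words of a fixed length $d$ fall into conjugacy classes of size $d$ each containing exactly one Lyndon word (its lexicographically least rotation), so the number of primitive words of length $d$ is $d\, L_\sigma(d)$, and summing over $d \mid k$ yields $\sum_{d \mid k} d\, L_\sigma(d) = \sigma^k$.

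The main obstacle is surjectivity: showing that every $w \in \Sigma^k$ appears as a length-$k$ window of $C$. Here I would use that the conjugacy class of any $w$ has a unique lexicographically least rotation, which is a power $u^{k/d}$ of the Lyndon word $u$ of length $d$ equal to its primitive period, and that the $d$ distinct rotations of $u^{k/d}$ are exactly the $d$ words whose least rotation is $u^{k/d}$; the same identity $\sum_{d\mid k} d\, L_\sigma(d) = \sigma^k$ confirms these classes partition $\Sigma^k$. It then remains to realise each such rotation as a window of $C$. The delicate local step is that the window of length $k$ starting at the first symbol of the block $u$ reads exactly $u^{k/d}$, and that the windows beginning at successive positions sweep through all $d$ rotations of $u^{k/d}$ before the next Lyndon block takes over; this hinges on how consecutive Lyndon words stitch together under lexicographic order, and is precisely the content of the Fredricksen--Maiorana theorem. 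I would either invoke it directly via the references or prove it by identifying $C$ with the greedy ``prefer-largest'' construction and showing the latter is surjective. I expect this stitching lemma to be the crux, the length and cardinality reductions being routine.
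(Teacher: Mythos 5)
The paper offers no proof of this lemma at all: it is stated as a known classical result (the Fredricksen--Maiorana/FKM theorem) and discharged by citation, so there is no ``paper proof'' to match your approach against. Your surrounding reductions are correct and cleanly stated: the counting identity \(\sum_{d \mid k} d\,L_\sigma(d) = \sigma^k\) via primitive roots and conjugacy classes gives the right length, and on a circular word of length \(\sigma^k\) the occurrence of every word of \(\Sigma^k\) is indeed equivalent to each occurring exactly once.

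The genuine gap is the surjectivity step, which you correctly identify as the crux but do not establish. Your first fallback --- ``invoke the Fredricksen--Maiorana theorem directly via the references'' --- is circular, because the lemma you are asked to prove \emph{is} that theorem; citing it is what the paper already does, not a proof. Your second fallback, identifying the Lyndon concatenation with the greedy prefer-largest (or prefer-one) sequence and proving that construction surjective, is a legitimate route but is only named: it requires two nontrivial arguments, namely that the greedy sequence visits every \(k\)-word (an Eulerian-circuit argument on the de Bruijn graph, or an explicit descent argument) and that it coincides symbol-for-symbol with the lexicographic concatenation of Lyndon words of lengths dividing \(k\). The latter identification is itself essentially the content of the FKM theorem and is where the delicate analysis of how consecutive Lyndon blocks overlap lives. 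A self-contained proof would need, at minimum, the lemma that for consecutive Lyndon words \(u < v\) in the concatenation, every rotation of \(u^{k/|u|}\) appears as a window starting inside the block \(u\) --- you state this but give no argument for it. As written, the proposal reduces the theorem to itself.
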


De Bruijn sequences are naturally circular, so we consider an unfolding where at some point the  circular string is cut and the first \(k-1\) positions are repeated at the end, or vice versa, creating the same set of substrings, and upholding the fact that each \(k\)-substring appears exactly once.
Let \(L(k, \Sigma)\) be the de Bruijn sequence generated by Lyndon words as described in \Cref{lemma:dBLyndon}, then we define the unfolded de Bruijn sequence \[\dB(k, \Sigma) = L(k, \Sigma) L(k, \Sigma)[1,k-1].\]

As an example, we construct a de Bruijn sequence using Lyndon words for \(k, \sigma = 3\) with \(\Sigma = \{a,b,c\}\). 
As \(k\) is prime, we only need to consider lengths \(l \in \{1,3\}\). 
The smallest Lyndon word is \(a\), followed by \(aab\). 
Note that \(aa\) and \(aaa\) are not Lyndon as they are lexicographically equivalent to their rotations, not smaller.
The complete cyclic de Bruijn sequence is then 
\[a\text{ }aab\text{ }aac\text{ }abb\text{ }abc\text{ }acb\text{ }acc\text{ }b\text{ }bbc\text{ }bcc\text{ }c.\]
The gaps highlight the different Lyndon words and are not part of the sequence itself.
To make this a linear word we apply a circular unfolding by repeating the first \(k-1 = 2\) positions at the end and obtain 
\[\dB(3,\{a, b, c\}) = aaabaacabbabcacbaccbbbcbcccaa.\]

\begin{lemma}[Attractors using de Bruijn sequences]\label{lemma:chainDeBruijn}
    The concatenation of circular unfoldings of de Bruijn sequences of order \(k\) over an alphabet \(\Sigma\) and lengths 1 to \(k\)
    \[\prod\limits_{i \in [1,k]} \dB(i, \Sigma)\]
    allows to cover all substrings of lengths 1 to \(k\) over \(\Sigma\) at cost \(\sum_{i=1}^k \lfloor \frac{\sigma^{i} + (i-1)}{i} \rfloor\).
\end{lemma}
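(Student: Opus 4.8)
The plan is to build the attractor on the concatenation by handling each de Bruijn segment $\dB(i, \Sigma)$ independently: I place the attractor positions lying inside $\dB(i, \Sigma)$ so as to cover exactly the substrings of length $i$, exploiting that this sequence packs all $\sigma^i$ distinct $i$-substrings as densely as possible. Recall from the unfolding that $\dB(i, \Sigma) = L(i, \Sigma)L(i, \Sigma)[1, i-1]$ has length $\sigma^i + (i-1)$ and, by \Cref{definition:dB}, every $i$-substring over $\Sigma$ occurs in it exactly once; hence the $i$-substrings occupy the starting positions $1, 2, \dots, \sigma^i$ within the segment, each carrying a distinct string.

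The core step is a tiling argument. A single attractor position $p$ covers precisely the $i$-substrings starting at $p - i + 1, \dots, p$, i.e.\ a block of $i$ consecutive starting positions. Placing marks at the local positions $i, 2i, 3i, \dots, mi$ with $m = \lceil \sigma^i / i \rceil$ therefore covers all starting positions $1, \dots, mi$, and since $mi \ge \sigma^i$ this includes all of $1, \dots, \sigma^i$; consecutive blocks are disjoint, so no coverage is wasted. I must check that the last mark still lies inside the segment, which follows from $mi = \lceil \sigma^i/i\rceil \cdot i \le \sigma^i + (i-1) = |\dB(i,\Sigma)|$. Thus all $\sigma^i$ distinct $i$-substrings are covered using exactly $m$ positions.

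It then remains to count and to argue global correctness. The number of marks spent on the $i$-th segment is $\lceil \sigma^i / i \rceil$, which equals $\lfloor (\sigma^i + i - 1)/i \rfloor$ by the identity $\lceil a/b\rceil = \lfloor (a+b-1)/b\rfloor$; summing over $1 \le i \le k$ gives the claimed cost $\sum_{i=1}^k \lfloor (\sigma^i + (i-1))/i\rfloor$. For correctness, note that any distinct substring of the full concatenation of length $j \le k$ is simply some string in $\Sigma^j$, which occurs in $\dB(j, \Sigma)$ with an occurrence crossing one of the marks placed in that segment; hence every substring of length at most $k$ is covered and the marking is a valid $k$-attractor.

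The main obstacle is the tiling step: one has to confirm that maximal spacing (marks exactly $i$ apart) still hits every distinct $i$-substring, which is exactly where the de Bruijn property -- each $i$-substring occurring once, so that consecutive starting positions carry distinct strings -- is indispensable. The boundary bookkeeping (the first block starting at position $1$, and the last mark fitting inside the segment) and the floor/ceiling identity are routine once this is in place; incidentally, each segment then meets the lower bound of \Cref{lemma:RelSubCom}, so the coverage of each length is optimal.
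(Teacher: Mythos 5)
Your proof is correct and follows essentially the same approach as the paper: mark every \(i\)-th position within each unfolded segment \(\dB(i,\Sigma)\), use the de Bruijn property to conclude that all of \(\Sigma^i\) is covered, and count \(\lceil \sigma^i/i\rceil = \lfloor(\sigma^i+i-1)/i\rfloor\) marks per segment. You simply spell out the tiling, boundary, and global-correctness details that the paper's one-line proof leaves implicit.
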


\begin{proof}
    Marking every \(i\)-th element in \(\dB(i, \Sigma)\) covers all substrings of length \(i\) that appear in the sequence, which is equivalent to all strings in \(\Sigma^i\).
    In total, this covers all substrings of lengths 1 to \(k\) on \(\Sigma\).
    Each circular de Bruijn sequence needs to be unfolded, adding an additional \(i-1\) to its length.
\end{proof}

We conjecture that it is possible to find a better string for the offline algorithm, that is, a de Bruijn sequence of order \(k\) with an optimal attractor of size\((\sigma^k + k - 1)/k\).
Mantaci et al. \cite{combinatorialView} state that indeed every de Bruijn sequence has an optimal \(k\)-attractor by just marking any equidistant positions which would yield the desired statement.
We claim that the situation is not as simple, although the original statement still holds.
As an example, again consider the string 
\[\dB(3,\{a, b, c\}) = a a\underline{a}b a\underline{a}c a\underline{b}b a\underline{b}c a\underline{c}b a\underline{c}c b \underline{b}bc \underline{b}cc \underline{c} aa.\]
This is the de Bruijn sequence for \(k, \sigma = 3\) created using Lyndon words with the last \(2 = k-1\) elements repeated.
Marking every third position starting at \(\dB(3,\{a, b, c\})[3]\) as above returns the unique smallest sharp 3-attractor for this string, but this does not cover the 2-substring \(ba\), thus this de Bruijn sequence with this particular unfolding has no optimal attractor of the desired size. 
However, this can be improved by using a different circular unfolding of the de Bruijn sequence, e.g. 
\(c c \underline{a} aa\underline{b} aa\underline{c} ab\underline{b} ab\underline{c} ac\underline{b} ac\underline{c} b b\underline{b}c b\underline{c}\) where we repeat the last \(k-1\) positions in front instead. 
The same procedure now yields a correct 3-attractor.

As there is a large number of de Bruijn sequences and each can be split at any point, we conjecture that for any \(k\) and \(\Sigma\) there always exists at least one de Bruijn sequence whose minimum sharp \(k\)-attractor is also a valid (and thereby minimum) \(k\)-attractor.
If the conjecture is correct, it solves the general question of the minimum size of a \(k\)-attractor covering all substrings.
We also raise the question how attractors behave on general de Bruijn sequences. 
Does every de Bruijn sequence have an attractor of that size, at least if the circular sequence is unfolded in specific way?
If not, what de Bruijn sequences force the largest attractor and what is the size of that attractor?

We make a first step towards this conjecture by showing that it holds for all prime orders.

\begin{theorem}
    For prime numbers \(p\) and any \(\Sigma\), the Lyndon-generated de Bruijn sequence \(dB'(p, \Sigma)\) with the last \(p-1\) positions repeated in front, has an optimal \(p\)-attractor consisting of every \(p\)-th element.
\end{theorem}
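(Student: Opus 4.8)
The plan is to establish two things: that marking every $p$-th position has the minimum possible size, and that it is a valid $p$-attractor. The size is immediate. Writing $L = L(p,\Sigma)$ and $n = \sigma^p$, the unfolded string $dB'(p,\Sigma)$ has length $n + (p-1)$, so the marked positions $p, 2p, 3p, \dots$ number exactly $\lfloor (n+p-1)/p\rfloor = \lceil n/p\rceil$. Since $dB'(p,\Sigma)$ contains all $\sigma^p$ distinct length-$p$ substrings, \Cref{lemma:RelSubCom} with $l = p$ forces every $p$-attractor to have size at least $\lceil \sigma^p/p\rceil$. Hence the marking is optimal the moment it is shown to be valid, and from here I concentrate on validity. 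The prepended prefix occupies positions $1,\dots,p-1$ and $dB'(p,\Sigma)[p-1+i] = L[i]$, so the marks sit exactly on the $L$-positions congruent to $1$ modulo $p$. Coverage of length-$p$ substrings is then automatic: every window of length $p$ contains exactly one position $\equiv 1 \pmod p$, and as $L$ is a de Bruijn sequence each length-$p$ string occurs (with the prepended prefix supplying the windows across the seam). It remains to cover every substring $w$ of length $l$ with $1 \le l \le p-1$.

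I would reduce this to a statement about residues. Working cyclically (using the prepended prefix to close the seam), each occurrence of $w$ extends uniquely to the length-$p$ window having $w$ as a prefix, and these windows are precisely the $\sigma^{p-l}$ distinct strings $\{w x : x \in \Sigma^{p-l}\}$, each occurring once. A window starting at position $s$ carries its unique mark at relative position $\rho(s) = ((1-s)\bmod p)+1$, depending only on $s \bmod p$; its length-$l$ prefix $w$ contains the mark exactly when $\rho(s) \le l$, i.e.\ when $s \bmod p$ lies in the block of $l$ residues $G_l = \{p-l+2,\dots,p-1,0,1\}$. Thus $w$ is covered iff at least one of the windows $wx$ starts at a residue in $G_l$, and the goal becomes: \emph{no short substring has all of its prefix-windows starting at the $p-l$ bad residues} $B_l = \{2,\dots,p-l+1\}$.

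To control these residues I would exploit the Lyndon-word construction of \Cref{lemma:dBLyndon}. For prime $p$ the only Lyndon lengths are $1$ and $p$, so $L$ alternates the $\sigma$ single letters with blocks of length $p$, and the letter $a_i$ is placed immediately before the block of length-$p$ Lyndon words beginning with $a_i$. If a length-$p$ block starts at position $s$, then positions $1,\dots,s-1$ consist of $c$ single letters and $d$ full blocks, so $s - 1 = c + dp$ and $s \equiv c+1 \pmod p$; for the $a_i$-block one has $c = i$, so a Lyndon word beginning with the $i$-th smallest letter carries its mark at relative position $p-i+1$. This pins down the residues of all windows that coincide with Lyndon words, and the remaining windows—those straddling two blocks, and those near the seam that the prepend-by-$(p-1)$ unfolding is designed to realign—can be tracked by the same bookkeeping of single-letter counts. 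I would then combine this with Fermat's little theorem, $\sigma^p \equiv \sigma \pmod p$, which governs how block-start residues progress as one sweeps through $L$, to verify that for each $w$ the strings $\{wx\}$ cannot all fall into $B_l$.

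The main obstacle is exactly this last distributional claim. The counting is on a knife's edge: for $l = p-1$ a substring has only $\sigma$ prefix-windows against the single bad residue $\{2\}$, so one must genuinely exhibit a good occurrence rather than count, and the constant necklaces $a_i^p$—whose windows are produced by the single letters rather than by a Lyndon block—together with the boundary- and seam-crossing windows break the clean ``one mark per relative position'' picture and need separate treatment. I expect the crux to be showing that the specific prepend unfolding, interacting with primality of $p$, forces every prefix-class $\{wx\}$ to contain a window realigned onto a residue of $G_l$, with the constant-necklace and boundary windows supplying precisely the savers (as the window at the very start supplies the good occurrence of $w = aa$ in the running example, where the Lyndon-word windows $aab, aac$ both land on the bad residue).
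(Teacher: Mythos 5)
Your reduction is sound as far as it goes: the size bound via \Cref{lemma:RelSubCom}, the identification of the marks with the $L$-positions $\equiv 1 \pmod p$, the automatic coverage of length-$p$ substrings, and the reformulation ``$w$ of length $l$ is covered iff some occurrence of $w$ starts at a residue in $G_l$'' are all correct, as is the computation that a length-$p$ Lyndon word beginning with the $i$-th smallest letter is marked at relative position $p-i+1$. But the argument stops exactly where the theorem lives: the claim that no short $w$ has all of its occurrences at bad residues is stated as a goal, not proven, and you flag it yourself as ``the main obstacle'' and ``the crux.'' The route you sketch for closing it --- enumerating the right-extensions $wx$ and tracking block-start residues with Fermat's little theorem --- is not a counting problem that can be won: as you observe, for $l=p-1$ there are $\sigma$ occurrences against a single bad residue, so no pigeonhole or equidistribution argument forces a good one; a specific good occurrence must be exhibited. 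This is a genuine gap, not a deferred verification.

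The missing idea, and the one the paper uses, is to \emph{left}-extend instead of enumerating right-extensions: for $w$ of length $l<p$, consider the unique occurrence of the window $w^*=a^{p-l}w$. Every $a$ in the sequence occurs before the singleton Lyndon word $b$ (a length-$p$ Lyndon word containing an $a$ must start with $a$), so this window sits in the initial region made up of length-$p$ Lyndon words beginning with $a$, each of which --- by your own residue computation with $i=1$ --- is marked on its \emph{last} position. Either $w^*$ is itself such a Lyndon word, and the mark on its last position lands in $w$; or $w^*$ straddles two consecutive Lyndon words $y_1y_2$ and contains the marked last position of $y_1$. That marked letter is not an $a$ (no Lyndon word other than the singleton $a$ ends in $a$, else rotating the trailing $a$ to the front would give a smaller conjugate), so it cannot lie in the all-$a$ prefix $a^{p-l}$ of $w^*$ and must lie in the $w$-part; in your language, this exhibits an occurrence of $w$ whose starting residue lies in $G_l$. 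Your optimality-of-size paragraph can be kept verbatim; it is the validity half that needs this construction.
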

\begin{proof}
    Without loss of generality, let \(\Sigma = \{a, b, \dots, z\}\). Then, 
    \[\dB'(p, \Sigma) = z^{p-1} \underline{a} a^{p-1} \underline{b} \dots z.\]
    We claim that until the point when \(b\) appears in the enumeration of Lyndon words, that is, after all Lyndon words starting in \(a\), all substrings over \(\Sigma\) of all lengths up to \(p-1\) have been covered.
    If the claim holds, only the \(p\)-substrings remain, and as a \(k\)-equidistant marking on a de Bruijn sequence of order \(k\) always yields a sharp \(k\)-attractor, the described attractor is then correct.

    We show that any substring \(w \in \Sigma^*\) shorter than \(p\) is covered by the described marking.
    Note that as \(p\) is prime, all Lyndon words have length 1 or \(p\). 
    As the first marking happens on the Lyndon word \(a\), all Lyndon words until the appearance of the Lyndon word \(b\) have length \(p\) and are marked on their last position.
    Further, no \(a\) appears after the point when \(b\) appears as a Lyndon word, as any word containing an \(a\) can be rotated to be lexicographically smaller than \(b\). 
    Thus, for every substring \(w\) its extension \(w^* = a^{p-x} w\) appears before that point.
    This shows that every substring \(w \in \Sigma^x\) for all \(x < p\) appears before the Lyndon word \(b\), now we show that they are actually covered.
    
    If \(w^*\) is Lyndon, it is marked on its last position, which is a part of \(w\), thus \(w\) is covered.
    Otherwise, two Lyndon words \(y_1, y_2 \in \Sigma^p\) exist such that they appear consecutively in \(\dB'(p, \Sigma)\) and \(w^*\) is a proper infix of \(y_1 y_2\) and overlaps with the last position in \(y_1\). 
    Then, \(y_1\) (as all Lyndon words except \(a\)) does not end in an \(a\), otherwise rotating that \(a\) in front would create a lexicographically smaller word, thus \(y_1\) was not minimal in its conjugacy class and not Lyndon.
    This last position is part of \(w^*\), not an \(a\) and therefore a part of \(w\), which implies \(w\) is covered.
\end{proof}

This proof does not easily extend to nonprime orders \(k\), as the Lyndon words of lengths that are nontrivial divisors of \(k\) shift the position where each Lyndon word is marked. 

\subsection*{Combining Spoonfeeding and De Bruijn sequences}

We now create the full input to finalise the proof of the lower bound.

\begin{lemma}\label{lemma:onlinelower}
    The competitive ratio of \textsc{Lazy} for the online \(k\)-attractor problem for constant \(k\) on the string 
    \[T_1 T_2 = \SF(k) \prod\limits_{i=1}^k \dB(i, \Sigma)\] converges to \(k\).
\end{lemma}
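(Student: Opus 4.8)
The plan is to instantiate the general spoon-feeding/de~Bruijn framework laid out at the start of \Cref{subs:lowerbound}: bound \textsc{Lazy}'s cost from below using only the prefix $T_1 = \SF(k)$, and bound the optimum from above using an attractor placed entirely inside the suffix $T_2 = \prod_{i=1}^k \dB(i,\Sigma)$. Concretely, I would first argue that $\cost(\textsc{Lazy}(T)) \geq \cost(\textsc{Lazy}(T_1))$, since an online algorithm's decisions on a prefix are unaffected by later input, so every marking \textsc{Lazy} places while reading $\SF(k)$ as a stand-alone string it also places while reading it as the prefix of $T$. Combined with the second property of \Cref{lemma:SFproperties}, this yields the numerator bound $\cost(\textsc{Lazy}(T)) \geq \sigma^k - \sigma^{k-1}$.

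For the denominator I would exhibit a single valid $k$-attractor for all of $T$ whose markings all lie in $T_2$. By \Cref{lemma:chainDeBruijn}, the equidistant markings on $\prod_{i=1}^k \dB(i,\Sigma)$ cover every word in $\Sigma^i$ for each $1 \le i \le k$ at total cost $\sum_{i=1}^k \lfloor (\sigma^i + (i-1))/i \rfloor$. Since every substring of $T$ of length at most $k$ is simply some word over $\Sigma$ of length at most $k$ --- including any substring straddling the $T_1T_2$ seam, which by the overlap construction introduces nothing new --- each is covered by its occurrence inside $T_2$. Hence this is a valid $k$-attractor for $T$, giving $\cost(\OPT(T)) \le \sum_{i=1}^k \lfloor (\sigma^i + (i-1))/i \rfloor$.

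Putting the two bounds together, the ratio on $T$ satisfies
\[
\frac{\cost(\textsc{Lazy}(T))}{\cost(\OPT(T))} \;\geq\; \frac{\sigma^k - \sigma^{k-1}}{\sum_{i=1}^{k} \left\lfloor \dfrac{\sigma^{i} + (i-1)}{i} \right\rfloor}.
\]
The final step is the asymptotic analysis as $\sigma \to \infty$ with $k$ fixed. The numerator equals $\sigma^k(1 - 1/\sigma)$, while in the denominator the $i=k$ summand contributes $\sigma^k/k + O(1)$ and every other summand is $O(\sigma^{k-1})$, so the whole denominator is $\sigma^k/k + O(\sigma^{k-1})$. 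Dividing numerator and denominator by $\sigma^k$ gives $(1 - 1/\sigma)/(1/k + O(1/\sigma)) \to k$. Because \textsc{Lazy} is strictly $k$-competitive by \Cref{lemma:onlineupper}, the ratio never exceeds $k$, so it is squeezed to $k$ and the stated convergence follows.

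The main obstacle I anticipate is not the arithmetic but making the two structural reductions airtight: I must ensure the de~Bruijn attractor genuinely covers substrings that cross the $T_1T_2$ seam (handled by the overlap-introduces-no-new-substring property noted at the start of the subsection), and that the leading $i=k$ term of the denominator truly dominates the full sum so that the floor corrections and the geometric tail $\sum_{i<k}\sigma^i/i$ are absorbed into the $O(\sigma^{k-1})$ error. Since $k$ is constant, this dominance is clean, but it is the one place where constancy of $k$ is essential for the limit to equal exactly $k$ rather than merely being bounded by it.
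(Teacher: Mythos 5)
Your proposal is correct and follows essentially the same route as the paper: lower-bound \textsc{Lazy}'s cost by its cost on the prefix $\SF(k)$ via \Cref{lemma:SFproperties}, upper-bound the optimum by the de Bruijn attractor of \Cref{lemma:chainDeBruijn}, and take the limit $\sigma \to \infty$ of the resulting ratio (the paper uses l'H\^opital where you use a direct dominant-term estimate, and you additionally make explicit the appeal to the $k$-competitive upper bound to conclude convergence to exactly $k$, which the paper leaves implicit).
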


\begin{proof}
    We have constructed two strings \(T_1 = \SF(k)\) and \(T_2 = \prod_{i=1}^k \dB(i, \Sigma)\) with costs of \(\sigma^k + \sigma^{k-1}\) for \textsc{Lazy} and \(\sum_{i=1}^k (\sigma^i + (i-1)) / i\) for the optimal solution respectively.
    We plug in these values into the formula for the spoon-feeding framework, consider what happens as \(\sigma\) goes to infinity, and apply the rule of l'Hôpital \(k\) times.
    
    \[\lim\limits_{\sigma \to \infty} \frac{\sigma^k - \sigma^{k-1}}{\sum_{i=1}^k (\sigma^i + (i-1)) / i } 
    = \lim_{\sigma \to \infty}
    \frac{\sigma^{k} - \sigma^{k-1}}
    {\frac{\sigma^{k}}{k} + \frac{\sigma^{k-1}}{k-1} + \dots + \sigma + k}
    \stackrel{\text{l'Hôpital}}{=} \frac{1}{1/k} = k\]

    Note that because both costs diverge, there is no constant \(\alpha\) which allows for a better bound, and thus the bounds on the competitive and strict competitive ratio are the same.
\end{proof}

A similar setup on \(T_1 = \sff(k,k), T_2 = \dB(k, \Sigma)\) also shows that \(k\) is a lower bound for the competitiveness of the online sharp \(k\)-attractor problem.

\begin{lemma}\label{lemma:onlinesharplower}
    The competitive ratio of the greedy algorithm \textsc{Lazy} for the online sharp \(k\)-attractor problem for constant \(k\) on the string 
    \[T_1 T_2 = \sff(k,k) \dB(k, \Sigma)\] converges to \(k\).
\end{lemma}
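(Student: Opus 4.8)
The plan is to mirror the structure of the proof of \Cref{lemma:onlinelower}, but specialized to the single-length setting where we only need the spoon-feeding string for length exactly $k$ and the de Bruijn sequence of order $k$. First I would identify the costs of the two halves of the input. For $T_1 = \sff(k,k)$, by property 2 of \Cref{lemma:sffproperties}, the cost of \textsc{Lazy} is at least $|\mathcal{W}(k)| = \sigma^{k-2}(\sigma-1)^2$, since each spoon-feeding substring of length $k$ appears for the first time when it is spoon-fed and the occurrences are $k$ apart so no marking can cover two of them. For the optimal solution I would restrict attention to $T_2 = \dB(k, \Sigma)$: marking every $k$-th position covers all $k$-substrings (all of $\Sigma^k$) and, since a de Bruijn sequence of order $k$ contains each $k$-substring exactly once, this equidistant marking is a valid sharp $k$-attractor. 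Its size is $\lfloor (\sigma^k + (k-1))/k \rfloor$, following the same reasoning as \Cref{lemma:chainDeBruijn} for a single order.

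Next I would verify the two conditions required by the spoon-feeding framework stated before \Cref{def:sfsubstringsdef}: that $T_1$ and $T_2$ contain the same set of length-$k$ substrings, and that the overlap between them introduces no new substrings of length up to $k$. By property 3 of \Cref{lemma:sffproperties}, $\sff(k,k)$ contains all substrings of length up to $k$, and $\dB(k,\Sigma)$ by construction contains every $k$-substring over $\Sigma$. Since these are precisely all of $\Sigma^{\le k}$ in both cases, the online algorithm incurs no additional cost on $T_2$, and the optimum can cover everything using only $T_2$. I would then note that the delimiter structure of $\sff(k,k)$ (the $k$-fold repetition of the boundary symbols) ensures the junction between $T_1$ and $T_2$ creates no fresh length-$k$ substring that was not already present, so the framework inequality
\[
c \geq \frac{\cost(\textsc{Lazy}(T_1))}{\cost(\OPT(T_2))}
\]
applies directly.

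Finally I would take the limit. Plugging in the two costs gives
\[
\lim_{\sigma \to \infty} \frac{\sigma^{k-2}(\sigma-1)^2}{\lfloor (\sigma^k + k - 1)/k \rfloor}
= \lim_{\sigma \to \infty} \frac{\sigma^k - 2\sigma^{k-1} + \sigma^{k-2}}{\sigma^k/k}
= k,
\]
where the leading terms dominate as $\sigma \to \infty$ and the floor contributes only lower-order corrections. As in \Cref{lemma:onlinelower}, since both costs diverge there is no additive constant $\alpha$ that improves the bound, so the strict and non-strict competitive ratios coincide.

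The main obstacle I anticipate is the careful bookkeeping at the boundary between $\sff(k,k)$ and $\dB(k,\Sigma)$: I must confirm that the transition does not accidentally create a novel length-$k$ substring that forces an extra \textsc{Lazy} marking inside $T_2$ (which would inflate the numerator) nor one that is absent from $T_2$ (which would invalidate covering everything with the de Bruijn half). Because $\sff(k,k)$ already contains all of $\Sigma^{\le k}$, every substring straddling the junction must already have an earlier occurrence, so this works out, but it is the step that genuinely requires the \emph{same set of substrings} hypothesis rather than a loose inclusion, and so deserves the most attention.
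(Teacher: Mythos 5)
Your proposal is correct and follows essentially the same route as the paper's own proof: it uses the cost bound $|\mathcal{W}(k)| = \sigma^{k-2}(\sigma-1)^2$ for \textsc{Lazy} on $\sff(k,k)$ (re-justified for the sharp setting because all spoon-fed strings have length exactly $k$), the $k$-equidistant marking of $\dB(k,\Sigma)$ of size $(\sigma^k+k-1)/k$ for the optimum, and the same limit as $\sigma \to \infty$. The extra care you take at the junction of $T_1$ and $T_2$ is a sound (if slightly more explicit) treatment of the framework hypothesis the paper invokes implicitly.
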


\begin{proof}
    For the adapted proof, we first need two observations on the behavior of sharp \(k\)-attractors on our constructed strings.
    First, on \(T_1 = \SF(k)\) \textsc{Lazy} still has a cost \(\sigma^k - 2\sigma^{k-1} + \sigma^{k-2}\), which was only shown for online \(k\)-attractor in \Cref{lemma:sffproperties}.
    This holds because the strings in \(\mathcal{W}(k)\) are all of length \(k\) and thus each induce cost for the sharp \(k\)-attractor.
    Second, marking every \(k\)-th element in a de Bruijn sequence of order \(k\) covers all \(k\)-substrings over the underlying alphabet \(\Sigma\), as no \(k\) consecutive positions are unmarked and the sequence contains all substrings.
    We can thus perform a very similar calculation:
    \[\lim\limits_{\sigma \to \infty} \frac{\sigma^k - 2\sigma^{k-1} + \sigma^{k-2}}{(\sigma^k + (k-1)) / k} 
    \stackrel{\text{l'Hôpital}}{=} \frac{1}{1/k} = k\]
\end{proof}

\section{Conclusion}

In this paper, we produced the first explicit results on the respective online versions of the string attractor and \(k\)-attractor problems.

The online string attractor problem is closely related to the Lempel-Ziv factorization as we showed that this algorithm actually describes the optimal online algorithm.
We thus obtained a bound on the performance of Lempel-Ziv on the families of Fibonacci and Thue-Morse words.
The competitive ratio of the \textsc{Lazy} algorithm for the string attractor problem is in \(\Theta(\log(n))\), although the lower bound is only proven when the optimal attractor has constant size.
For the limited scope, we show that online \(k\)-attractor is \(k\)-competitive and that this bound is tight.

Going further into this topic, we ask how the competitive ratio depends on the size of the optimal solution, and what worst-case instances for each optimal cost look like.
We conjecture that the competitive ratio decreases when the size of an optimal string attractor increases, that is, the cost of \textsc{Lazy} scales sublinearly with the size of the optimal solution.
In future work, this approach could be used to obtain better approximation guarantees for string attractors, as one either obtains a `bad' approximation of a small attractor, which is still small relative to the length of the input, or a better approximation of a large attractor.

\newpage

\bibliography{main}


\end{document}